\newtheorem{theorem}{Theorem}
\newtheorem{lemma}[theorem]{Lemma} 
\newtheorem{proposition}[theorem]{Proposition}
\newtheorem{definition}[theorem]{Definition}
\newcommand{\pluseq}{\mathrel{{+}{=}}}
\newcolumntype{P}[1]{>{\centering\arraybackslash}p{#1}}
  \providecommand\BibTeX{{%
    \normalfont B\kern-0.5em{\scshape i\kern-0.25em b}\kern-0.8em\TeX}}}
\begin{document}
\fancyhead{}

\title{Subset Node Representation Learning over Large Dynamic Graphs}

\author{Xingzhi Guo}
\authornote{Both authors contributed equally to this research}
\email{xingzguo@cs.stonybrook.edu}
\affiliation{%
  \institution{Stony Brook University}
  \city{Stony Brook}
  \country{USA}
}

\author{Baojian Zhou}
\authornotemark[1]
\email{baojian.zhou@cs.stonybrook.edu}
\affiliation{%
  \institution{Stony Brook University}
  \city{Stony Brook}
  \country{USA}
}

\author{Steven Skiena}
\email{skiena@cs.stonybrook.edu}
\affiliation{%
  \institution{Stony Brook University}
  \city{Stony Brook}
  \country{USA}
}

\begin{abstract}
 Dynamic graph representation learning is a task to learn node embeddings over dynamic networks, and has many important applications, including knowledge graphs, citation networks to social networks. Graphs of this type are usually large-scale but only a small subset of vertices are related in downstream tasks. Current methods are too expensive to this setting as the complexity is at best linear-dependent on both the number of nodes and edges.
 
 In this paper, we propose a new method, namely Dynamic Personalized PageRank Embedding (\textsc{DynamicPPE}) for learning a target subset of node representations over large-scale dynamic networks. Based on recent advances in local node embedding and a novel computation of dynamic personalized PageRank vector (PPV), \textsc{DynamicPPE} has two key ingredients: 1) the per-PPV complexity is $\mathcal{O}(m \bar{d} / \epsilon)$ where $m,\bar{d}$, and $\epsilon$ are the number of edges received, average degree, global precision error respectively. Thus, the per-edge event update of a single node is only dependent on $\bar{d}$ in average; and 2) by using these high quality PPVs and hash kernels, the learned embeddings have properties of both locality and global consistency. These two make it possible to capture the evolution of graph structure effectively.
 
 Experimental results demonstrate both the effectiveness and efficiency of the proposed method over large-scale dynamic networks. We apply \textsc{DynamicPPE} to capture the embedding change of Chinese cities in the Wikipedia graph during this ongoing COVID-19 pandemic \footnote{\textcolor{blue}{\url{https://en.wikipedia.org/wiki/COVID-19_pandemic}}}. Our results show that these representations successfully encode the dynamics of the Wikipedia graph.
 
\end{abstract}

\keywords{Dynamic graph embedding; Representation learning; Personalized PageRank; Knowledge evolution}

\maketitle

\section{Introduction}

Graph node representation learning aims to represent nodes from graph structure data into lower dimensional vectors and has received much attention in recent years \cite{perozzi2014deepwalk,grover2016node2vec,rossi2013modeling,trivedi2018dyrep,hamilton2017inductive,kipf2017semi,hamilton2017representation}. Effective methods have been successfully applied to many real-world applications where graphs are large-scale and static \cite{ying2018graph}. However, networks such as social networks \cite{berger2006framework}, knowledge graphs \cite{ji2015knowledge}, and citation networks \cite{clement2019use} are usually time-evolving where edges and nodes are inserted or deleted over time. Computing representations of all vertices over time is prohibitively expensive because only a small subset of nodes may be interesting in a particular application. Therefore, it is important and technical challenging to efficiently learn dynamic embeddings for these large-scale dynamic networks under this typical use case.

\begin{figure}
\centering
\subfloat[Dynamic graph model]{\includegraphics[width=3.3cm]{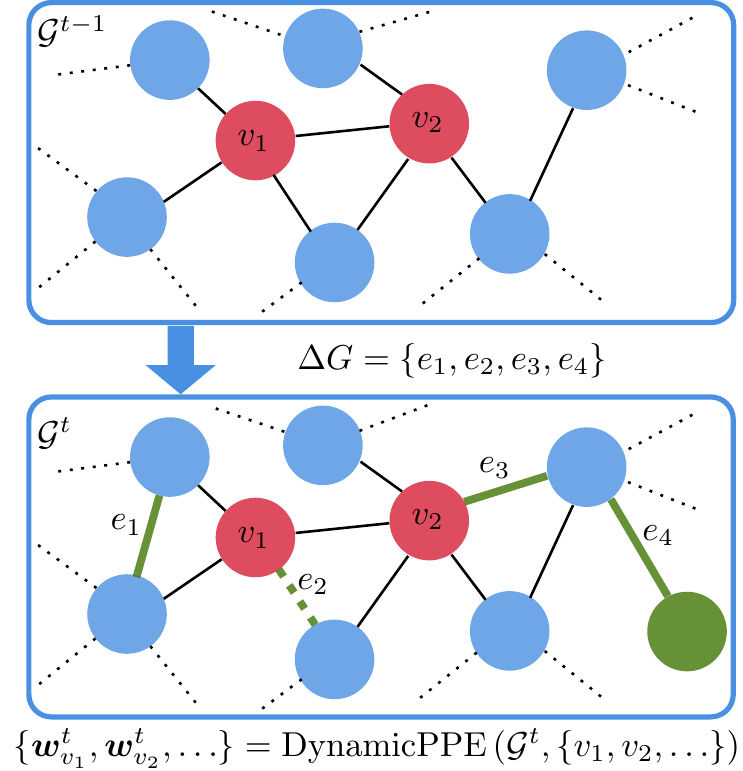}}\quad
\subfloat[\vspace{-2mm}An application of \textsc{DynamicPPE}]{\includegraphics[width=4.7cm]{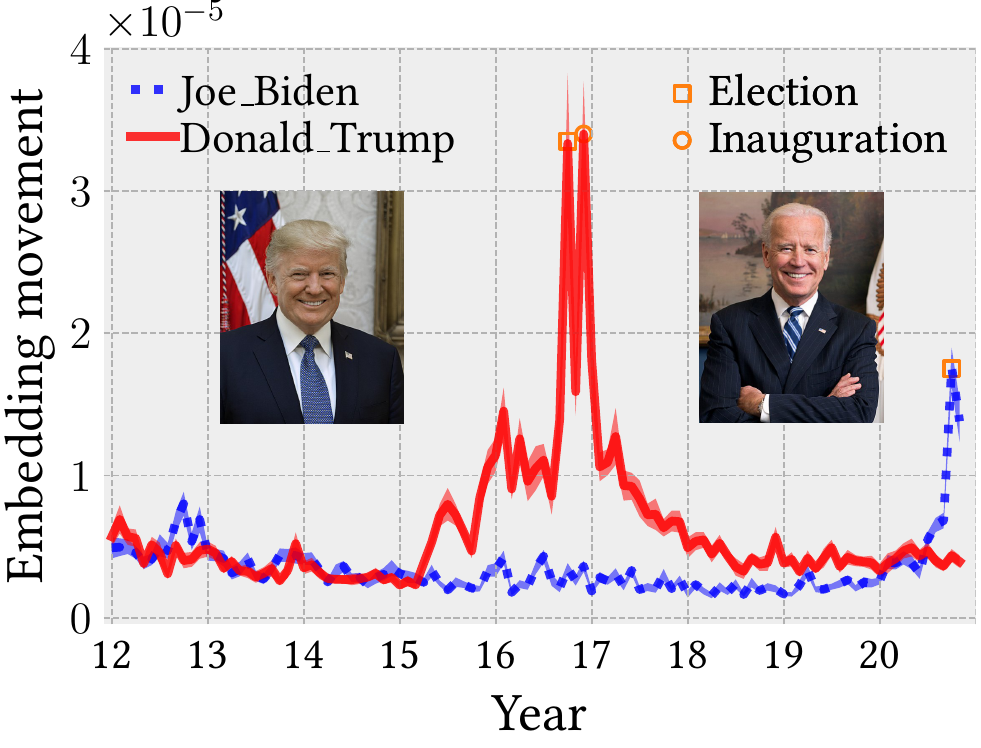}}
\caption{ (a) The model of dynamic network in two consecutive snapshots. (b) An application of \textsc{DynamicPPE} to keep track embedding movements of interesting Wikipedia articles (vertices). We learn embeddings of two presidents of the United States on the whole English Wikipedia graph from 2012 monthly, which cumulatively involves 6.2M articles (nodes) and 170M internal links (edges). The \textit{embedding movement} between two time points is defined as $1-\cos(\bm w_v^t, \bm w_v^{t+1})$ where $\cos(\cdot,\cdot)$ is the cosine similarity. The significant embedding movements may reflect big social status changes of Donald\_Trump  and Joe\_Biden \protect\footnotemark in this dynamic Wikipedia graph.\vspace{-5mm}}
\label{fig:example-model}
\end{figure}

\footnotetext{Two English Wikipedia articles are accessible at \textcolor{blue}{\url{https://en.wikipedia.org/wiki/Donald_Trump}} and \textcolor{blue}{\url{https://en.wikipedia.org/wiki/Joe_Biden}}.}

Specifically, we study the following dynamic embedding problem: We are given a subset $S=\{v_1,v_2,\ldots, v_k\}$ and an initial graph $\mathcal{G}^t$ with $t=0$. Between time $t$ and $t+1$, there are edge events of insertions and/or deletions. The task is to design an algorithm to learn embeddings for $k$ nodes with time complexity independent on the number of nodes $n$ per time $t$ where $k \ll n$. This problem setting is both technically challenging and practically important. For example, in the English Wikipedia graph, one need focus only on embedding movements of articles related to political leaders, a tiny portion of whole Wikipedia. Current dynamic embedding methods \cite{du2018dynamic,zhou2018dynamic,zhang2018billion,nguyen2018continuous,zhu2016scalable} are not applicable to this large-scale problem setting due to the lack of efficiency. More specifically, current methods have the \textit{dependence issue} where one must learn all embedding vectors. This dependence issue leads to per-embedding update is linear-dependent on $n$, which is inefficient when graphs are large-scale. This obstacle motivates us to develop a new method. 

In this paper, we propose a dynamic personalized PageRank embedding (\textsc{DynamicPPE}) method for learning a subset of node representations over large-sale dynamic networks. \textsc{DynamicPPE} is based on an effective approach to compute dynamic PPVs \cite{zhang2016approximate}. There are two challenges of using \citet{zhang2016approximate} directly: 1) the quality of dynamic PPVs depend critically on precision parameter $\epsilon$, which unfortunately is unknown under the dynamic setting; and 2) The update of per-edge event strategy is not suitable for batch update between graph snapshots. To resolve these two difficulties, first, we adaptively update $\epsilon$ so that the \textit{estimation error} is independent of $n,m$, thus obtaining high quality PPVs. Yet previous work does not give an estimation error guarantee. We prove that the time complexity is only dependent on $\bar{d}$. Second, we incorporate a batch update strategy inspired from \cite{guo2017parallel} to avoid frequent per-edge update. Therefore, the total run time to keep track of $k$ nodes for given snapshots is $\mathcal{O}(k \bar{d} m)$. Since real-world graphs have the sparsity property $\bar{d} \ll n $, it significantly improves the efficiency compared with previous methods. Inspired by \textit{InstantEmbedding} \cite{postuavaru2020instantembedding} for static graph, we use hash kernels to project dynamic PPVs into embedding space. Figure \ref{fig:example-model} shows an example of successfully applying \textsc{DynamicPPE} to study the dynamics of social status in the English Wikipedia graph. To summarize, our contributions are:

\begin{enumerate}[leftmargin=*]
\item We propose a new algorithm \textsc{DynamicPPE}, which is based on the recent advances of local network embedding on static graph and a novel computation of dynamic PPVs. \textsc{DynamicPPE} effectively learns PPVs and then projects them into embedding space using hash kernels.
\item \textsc{DynamicPPE} adaptively updates the precision parameter $\epsilon$ so that PPVs always have a provable estimation error guarantee. In our subset problem setting, we prove that the time and space complexity are all linear to the number of edges $m$ but independent on the number of nodes $n$, thus significantly improve the efficiency.
\item Node classification results demonstrate the effectiveness and efficiency of the proposed. We compile three large-scale datasets to validate our method. As an application, we showcase that learned embeddings can be used to detect the changes of Chinese cities during this ongoing COVID-19 pandemic articles on a large-scale English Wikipedia.
\end{enumerate}

The rest of this paper is organized as follows: In Section \ref{sec:related-work}, we give the overview of current dynamic embedding methods. The problem definition and preliminaries are in Section \ref{sec:defintions}. We present our proposed method in Section \ref{sec:proposed}. Experimental results are reported in Section \ref{sec:experiments}. The discussion and conclusion will be presented in Section \ref{sec:conclusion}. Our code and created datasets are accessible at \textcolor{blue}{\url{https://github.com/zjlxgxz/DynamicPPE}}.

\section{Related Work}
\label{sec:related-work}

There are two main categories of works for learning embeddings from the dynamic graph structure data. The first type is focusing on capturing the evolution of dynamics of graph structure \cite{zhou2018dynamic}. The second type is focusing on both dynamics of graph structure and features lie in these graph data \cite{trivedi2019dyrep}. In this paper, we focus on the first type and give the overview of related works. Due to the large mount of works in this area, some related works may not be included, one can find more related works in a survey \cite{kazemi2020representation} and references therein.

\noindent\textbf{Dynamic latent space models}\quad The dynamic embedding models had been initially explored by using latent space model \cite{hoff2002latent}.  The dynamic latent space model of a network makes an assumption that each node is associated with an $d$-dimensional vector and distance between two vectors should be small if there is an edge between these two nodes \cite{sarkar2005dynamic-kdd,sarkar2005dynamic-nips}. Works of these assume that the distance between two consecutive embeddings should be small. The proposed dynamic models were applied to different applications \cite{sarkar2007latent,hoff2007modeling}.  Their methods are not scalable from the fact that the time complexity of initial position estimation is at least $\mathcal{O}(n^2)$ even if the per-node update is $\log(n)$.

\noindent\textbf{Incremental SVD and random walk based methods} \quad \citet{zhang2018timers} proposed \textsc{TIMERS} that is an incremental SVD-based method. To prevent the error accumulation, \textsc{TIMERS} properly set the restart time so that the accumulated error can be reduced. \citet{nguyen2018continuous} proposed continuous-time dynamic network embeddings, namely CTDNE. The key idea of CTNDE is that instead of using general random walks as DeepWalk \cite{perozzi2014deepwalk}, it uses temporal random walks contain a sequence of edges in order. Similarly, the work of \citet{du2018dynamic} was also based on the idea of DeepWalk. These methods have time complexity dependent on $n$ for per-snapshot update. \citet{zhou2018dynamic} proposed to learn dynamic embeddings by modeling the triadic closure to capture the dynamics. 

\noindent\textbf{Graph neural network methods} \citet{trivedi2019dyrep} designed a dynamic node representation model, namely \textsc{DyRep}, as modeling a latent mediation process where it leverages the changes of node between the node's social interactions and its neighborhoods. More specifically, \textsc{DyRep} contains a temporal attention layer to capture the interactions of neighbors. \citet{zang2020neural} proposed a neural network model to learn embeddings by solving a differential equation with ReLU as its activation function. \citep{kumar2019predicting} presents a dynamic embedding, a recurrent neural network method, to learn the interactions between users and items. However, these methods either need to have features as input or cannot be applied to large-scale dynamic graph. \citet{kumar2019predicting} proposed an algorithm to learn the trajectory of the dynamic embedding for temporal interaction networks. Since the learning task is different from ours, one can find more details in their paper.

\section{Notations and preliminaries}
\label{sec:defintions}

\textbf{Notations}\quad We use $[n]$ to denote a ground set $[n]:=\{ 0,1, \ldots, n-1\}$. The graph snapshot at time $t$ is denoted as $\mathcal{G}^t\left(\mathbb{V}^t,\mathbb{E}^t\right)$. The degree of a node $v$ is $d(v)^t$. In the rest of this paper, the average degree at time $t$ is $\bar{d}^t$ and the subset of target nodes is $S\subseteq \mathbb{V}^t$. Bold capitals, e.g. $\bm A, \bm W$ are matrices and bold lower letters are vectors $\bm w,\bm x$. More specifically, the embedding vector for node $v$ at time $t$ denoted as $\bm w_v^t \in \mathbb{R}^d$ and $d$ is the embedding dimension. The $i$-th entry of $\bm w_v^t$ is  $w_v^t(i) \in \mathbb{R}$. The embedding of node $v$ for all $T$ snapshots is written as $\bm W_v = [\bm w_v^1,\bm w_v^2,\ldots, \bm w_v^T]^\top$. We use $n_t$ and $m_t$ as the number of nodes and edges in $\mathcal{G}^t$ which we simply use $n$ and $m$ if time $t$ is clear in the context.

Given the graph snapshot $\mathcal{G}^t$ and a specific node $v$, the personalized PageRank vector (PPV) is an $n$-dimensional vector $\bm \pi_v^t \in \mathbb{R}^{n}$ and the corresponding $i$-th entry is $\pi_v^t(i)$.  We use $\bm p_v^t \in \mathbb{R}^{n}$ to stand for a calculated PPV obtained from a specific algorithm. Similarly, the corresponding $i$-th entry is $p_v^t(i)$. The teleport probability of the PageRank is denoted as $\alpha$. The \textit{estimation error} of an embedding vector is the difference between true embedding $\bm w_v^t$ and the estimated embedding $\hat{\bm w}_v^t$ is measure by $\| \cdot \|_1 := \sum_{i=1}^n \left| w_v^t(i) - \hat{w}_v^t(i)\right|$.

\subsection{Dynamic graph model and its embedding}

Given any initial graph (could be an empty graph), the corresponding dynamic graph model describes how the graph structure evolves over time. We first define the dynamic graph model, which is based on \citet{kazemi2020representation}.

\begin{definition}[Simple dynamic graph model \cite{kazemi2020representation}]
\label{def:dynamic-graph-model}
A simple dynamic graph model is defined as an ordered of snapshots $\mathcal{G}^0, \mathcal{G}^1$, $\mathcal{G}^2,  \ldots , \mathcal{G}^T$ where $\mathcal{G}^0$ is the initial graph. The \textit{difference} of graph $\mathcal{G}^t$ at time $t=1,2,\ldots,T$ is $\Delta G^t(\Delta \mathbb{V}^t,\Delta\mathbb{E}^t) := G^t \backslash G^{t-1}$ with $\Delta \mathbb{V}^t := \mathbb{V}^t \backslash \mathbb{V}^{t-1}$ and $\Delta \mathbb{E}^t :=\mathbb{E}^t \backslash \mathbb{E}^{t-1}$. Equivalently, $\Delta \mathcal{G}^t$ corresponds to a sequence of edge events as the following
\begin{equation}
\Delta \mathcal{G}^t = \left\{ e_1^t,e_2^t,\ldots, e_{m^\prime}^t \right\},
\end{equation}
where each edge event $e_i^t$ has two types: \textit{insertion} or \textit{deletion}, i.e, $ e_i^t = \left(u,v, \operatorname{event} \right)$ where $\operatorname{event} \in \{ \operatorname{Insert}, \operatorname{Delete}\}$ \footnote{The node insertion can be treated as inserting a new edge and then delete it and node deletion is a sequence of deleting its edges.}.
\end{definition}

The above model captures evolution of a real-world graph naturally where the structure evolution can be treated as a sequence of edge events occurred in this graph. To simplify our analysis, we assume that the graph is undirected. Based on this, we define the subset dynamic representation problem as the following.

\begin{definition}[Subset dynamic network embedding problem]
Given a dynamic network model $\left\{\mathcal{G}^0, \mathcal{G}^1, \mathcal{G}^2, \ldots, \mathcal{G}^T\right\}$ define in Definition \ref{def:dynamic-graph-model} and a subset of target nodes $S=\left\{ v_1,v_2,\ldots,v_k \right\}$, the subset dynamic network embedding problem is to learn dynamic embeddings of $T$ snapshots for all $k$ nodes $S$ where $k \ll n$. That is, given any node $v\in S$, the goal is to learn embedding matrix for each node $v \in S$, i.e. 
\begin{equation}
\bm W_v := [\bm w_v^1,\bm w_v^2,\ldots, \bm w_v^T]^\top \text{ where }\bm w_v^t \in \mathbb{R}^d \text{ and } v \in S.
\end{equation}
\end{definition}

\subsection{Personalized PageRank}

Given any node $v$ at time $t$, the personalized PageRank vector for graph $\mathcal{G}^t$ is defined as the following

\begin{definition}[Personalized PageRank (PPR)]
\label{def:ppr}
Given normalized adjacency matrix $\bm W_t = \bm D_t^{-1} \bm A_t$ where $\bm D_t$ is a diagonal matrix with $D_t(i,i) = d(i)^t$ and $\bm A_t$ is the adjacency matrix, the PageRank vector $\bm \pi_s^t$ with respect to a source node $s$ is the solution of the following equation
\begin{equation}
\bm \pi_s^t = \alpha * \bm 1_s +  (1-\alpha)\bm \pi_s^t \bm W^t, \label{equ:ppr}
\end{equation}
where $\bm 1_s$ is the unit vector with $1_s(v)=1$ when $v =s$, 0 otherwise.
\end{definition}

There are several works on computing PPVs for static graph \cite{andersen2006local,berkhin2006bookmark,andersen2007local}. The idea is based a local push operation proposed in \cite{andersen2006local}. Interestingly, \citet{zhang2016approximate} extends this idea and proposes a novel updating strategy for calculating dynamic PPVs. We use a modified version of it as presented in Algorithm \ref{algo:forward-local-push}.

\begin{algorithm}[H]
\caption{$\textsc{ForwardPush}$ \cite{zhang2016approximate}}
\begin{algorithmic}[1]
\State \textbf{Input: }$\bm p_s, \bm r_s, \mathcal{G}, \epsilon, \beta = 0$
\While{$\exists u, r_s(u) > \epsilon d(u)$}
\State $\textsc{Push}(u)$
\EndWhile
\While{$\exists u, r_s(u) < -\epsilon d(u)$} 
\State $\textsc{Push}(u)$
\EndWhile
\State \Return $(\bm p_s, \bm r_s)$
\Procedure{Push}{$u$}
\State $p_s(u) \pluseq \alpha r_s(u)$
\For{$v \in \operatorname{Nei}(u)$}
\State $r_s(v) \pluseq (1-\alpha) r_s(u) (1-\beta) / d(u)$
\EndFor
\State $r_s(u) = (1-\alpha) r_s(u) \beta $
\EndProcedure
\end{algorithmic}
\label{algo:forward-local-push}
\end{algorithm}

Algorithm \ref{algo:forward-local-push} is a generalization from \citet{andersen2006local} and there are several variants of forward push \cite{andersen2006local,lofgren2015efficient,berkhin2006bookmark}, which are dependent on how $\beta$ is chosen (we assume $\beta=0$). The essential idea of forward push is that, at each \textsc{Push} step, the frontier node $u$ transforms her $\alpha$ residual probability $r_s(u)$ into estimation probability $p_s(u)$ and then pushes the rest residual to its neighbors. The algorithm repeats this push operation until all residuals are small enough \footnote{There are two implementation of forward push depends on how frontier is selected. One is to use a first-in-first-out (FIFO) queue \cite{gleich2015pagerank} to maintain the frontiers while the other one maintains nodes using a priority queue is used \cite{berkhin2006bookmark} so that the operation cost is $\mathcal{O}(1/\epsilon \alpha)$ instead of $\mathcal{O}(\log n/ \epsilon \alpha)$.}.  Methods based on local push operations have the following invariant property.

\begin{restatable}[Invariant property \cite{hong2016discriminating}]{lemma}{primelemma}
\label{lemma:1}
 \textsc{ForwardPush} has the following invariant property
\begin{equation}
\pi_s(u) = p_s(u) + \sum_{v \in V} r_s(v) \pi_v(u), \forall u \in \mathbb{V}.
\end{equation}
\end{restatable}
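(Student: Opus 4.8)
The plan is to prove the identity by tracking how the quantity $p_s(u) + \sum_{v\in\mathbb{V}} r_s(v)\,\pi_v(u)$ changes during the execution of \textsc{ForwardPush}, and showing it is actually invariant under a single \textsc{Push} operation. First I would establish the base case: before any push is performed, the algorithm is initialized (in the static setting) with $\bm p_s = \bm 0$ and $\bm r_s = \bm 1_s$, so the right-hand side equals $\sum_{v} 1_s(v)\pi_v(u) = \pi_s(u)$, which matches the left-hand side trivially. (For the dynamic setting the initialization is a previously valid $(\bm p_s,\bm r_s)$ pair, so the same invariant is assumed to hold going in; either way it suffices to check preservation under one push.)

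The core step is the inductive one: fix a frontier node $w$ being pushed, and compare the right-hand side $R = p_s(u) + \sum_{v} r_s(v)\pi_v(u)$ before and after $\textsc{Push}(w)$. The push changes exactly these entries: $p_s(w) \mathrel{+}= \alpha r_s(w)$; each neighbor $x\in\operatorname{Nei}(w)$ gets $r_s(x)\mathrel{+}=(1-\alpha)r_s(w)/d(w)$; and $r_s(w)$ is reset to $0$ (taking $\beta=0$). Writing $\rho := r_s(w)$ for the old residual mass at $w$, the net change in $R$ is
\begin{equation}
\Delta R = \alpha\rho\,\pi_w(u) \;+\; \sum_{x\in\operatorname{Nei}(w)} \frac{(1-\alpha)\rho}{d(w)}\,\pi_x(u) \;-\; \rho\,\pi_w(u).
\end{equation}
So it remains to show this vanishes, i.e. that
\begin{equation}
\pi_w(u) = \alpha\,\pi_w(u) + \frac{1-\alpha}{d(w)}\sum_{x\in\operatorname{Nei}(w)} \pi_x(u).
\end{equation}
This is precisely a coordinate of the PPR fixed-point equation (Definition \ref{def:ppr}) read in the "reverse"/symmetric direction: for an undirected graph the PPV satisfies $\pi_w(u) = \alpha\,1_u(w) + (1-\alpha)\sum_{x\sim w}\pi_x(u)/d(w)$ when $w\neq u$, and more care is needed at $w=u$. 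I would handle this cleanly by invoking the linear-algebraic characterization $\bm\pi_v = \alpha(\bm I - (1-\alpha)\bm W)^{-1}\bm 1_v$, so that the matrix $\bm\Pi$ with rows $\bm\pi_v$ satisfies $\bm\Pi = \alpha \bm I + (1-\alpha)\bm W\bm\Pi$; reading off entry $(w,u)$ gives exactly the needed relation $\pi_w(u) = \alpha 1_w(u) + (1-\alpha)\sum_x W_{wx}\pi_x(u)$, and the $\alpha 1_w(u)$ term is absorbed because the push at $w$ also contributes the self-term correctly — in fact using this exact identity shows $\Delta R = \rho(\alpha 1_w(u) - 1_w(u) + \ldots)$ telescopes to $0$ after noting $p_s$ and the $r_s(w){\to}0$ reset together account for the $w=u$ diagonal. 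I would present this carefully so the $w=u$ case is not glossed over.

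Finally, since each \textsc{Push} preserves the quantity and the algorithm is a finite sequence of pushes (termination is standard and not needed for the invariant itself — it holds after every partial execution), the identity $\pi_s(u) = p_s(u) + \sum_{v} r_s(v)\pi_v(u)$ holds throughout, for all $u\in\mathbb{V}$. The main obstacle I anticipate is bookkeeping the diagonal/self term correctly: one must be careful that "$\pi_w(u)$ satisfies the PPR recurrence in the first argument" relies on symmetry of the underlying graph (so that $\bm W$ and $\bm W^\top$ are related via degree scaling), and that the $\alpha 1_s$ source term in Definition \ref{def:ppr} reappears in the right place when $w=u$; getting the $\beta=0$ reset and the $\alpha r_s(w)$ deposit to exactly cancel against the recurrence is the delicate part, everything else is routine.
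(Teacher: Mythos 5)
Your overall strategy---verify the identity at the initialization $\bm p_s=\bm 0$, $\bm r_s=\bm 1_s$, then show it is preserved by every single \textsc{Push}---is exactly the skeleton of the paper's argument, which likewise proceeds by induction over push operations and rests on the Jeh--Widom decomposition property $\bm \pi_w = \alpha \bm 1_w + \frac{1-\alpha}{d(w)}\sum_{x \in \operatorname{Nei}(w)} \bm \pi_x$. Your one-line derivation of that property from the fact that $\bm \Pi = \alpha(\bm I-(1-\alpha)\bm W)^{-1}$ commutes with $\bm I-(1-\alpha)\bm W$ is actually tidier than the paper's verification via uniqueness of the PPR solution, and it does not use undirectedness at all, so your worry about needing graph symmetry here is unfounded (symmetry is needed for Proposition~\ref{proposition:symmetry-property}, not for this lemma).

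One concrete error in your displayed computation must be repaired. The update $p_s(w)\pluseq\alpha\rho$ changes the coordinate $p_s(u)$ only when $u=w$, so its contribution to $\Delta R$ is $\alpha\rho\,1_w(u)$, not $\alpha\rho\,\pi_w(u)$ as you wrote; as a consequence, the identity you display as ``what remains to show,'' namely $\pi_w(u)=\alpha\,\pi_w(u)+\frac{1-\alpha}{d(w)}\sum_{x}\pi_x(u)$, is false in general (for $w\neq u$ it would force $\pi_w(u)=\frac{1}{d(w)}\sum_{x\in\operatorname{Nei}(w)}\pi_x(u)$, contradicting the decomposition property). With the corrected first term the cancellation is immediate and needs no case split on whether $w=u$:
\[
\Delta R=\rho\Bigl(\alpha\,1_w(u)+\frac{1-\alpha}{d(w)}\sum_{x\in\operatorname{Nei}(w)}\pi_x(u)-\pi_w(u)\Bigr)=0
\]
by the decomposition identity read at entry $(w,u)$ of $\bm\Pi=\alpha\bm I+(1-\alpha)\bm W\bm\Pi$. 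Your surrounding prose shows you already understand this bookkeeping; just make the displayed algebra say it. The same computation covers the negative-residual pushes since it is linear in $\rho$.
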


The local push algorithm can guarantee that the each entry of the estimation vector $p_s(v)$ is very close to the true value $\pi_s(v)$. We state this property as in the following

\begin{lemma}[\cite{andersen2006local,zhang2016approximate}]
\label{lemma:2}
Given any graph $\mathcal{G}(\mathbb{V},\mathbb{E})$ with $\bm p_s = \bm 0, \bm r_s = \bm 1_s$ and a constant $\epsilon$, the run time for \textsc{ForwardLocalPush} is at most $\frac{1-\|\bm r_s\|_1}{\alpha \epsilon}$ and the estimation error of $\pi_s(v)$ for each node $v$ is at most $\epsilon$, i.e. $|p_s(v) - \pi_s(v)|/d(v)| \leq \epsilon$  
\end{lemma}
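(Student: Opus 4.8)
The plan is to bound the two quantities separately: the running time by a potential-function / amortized argument on the $\ell_1$-norm of the residual, and the per-coordinate estimation error by combining the termination condition with the invariant property of Lemma~\ref{lemma:1}. I would start with the error claim since it is the shorter half. When \textsc{ForwardPush} terminates, the \texttt{while}-loops guarantee $|r_s(u)| \le \epsilon\, d(u)$ for every $u \in \mathbb{V}$. Plugging this into the invariant $\pi_s(v) = p_s(v) + \sum_{u} r_s(u)\,\pi_u(v)$ gives $|\pi_s(v) - p_s(v)| \le \sum_u |r_s(u)|\,\pi_u(v) \le \epsilon \sum_u d(u)\,\pi_u(v)$. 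Here I would use the standard fact that the PPVs $\{\bm\pi_u\}$ and the degree distribution are in a reciprocity relation for undirected graphs, namely $\sum_u d(u)\,\pi_u(v) = d(v)$ (equivalently, $d(u)\pi_u(v) = d(v)\pi_v(u)$, which follows from reversibility of the random walk defining $\bm W^t$ and the fact that each $\bm\pi_u$ sums to one). This yields $|\pi_s(v) - p_s(v)| \le \epsilon\, d(v)$, i.e. $|p_s(v) - \pi_s(v)|/d(v) \le \epsilon$, as claimed.

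For the running-time bound, I would track the potential $\Phi := \|\bm r_s\|_1$. Initially $\Phi = \|\bm 1_s\|_1 = 1$. Consider a single $\textsc{Push}(u)$ performed in the first loop, so $r_s(u) > \epsilon\, d(u) > 0$ (with $\beta = 0$). The operation moves $\alpha r_s(u)$ into $p_s(u)$ and distributes $(1-\alpha) r_s(u)$ among the $d(u)$ neighbors, zeroing $r_s(u)$. Since all the mass being moved is nonnegative, the change in $\|\bm r_s\|_1$ is exactly $-\alpha r_s(u) < -\alpha \epsilon\, d(u)$; crucially, $p_s(u)$ increases by exactly $\alpha r_s(u)$, so $\|\bm p_s\|_1$ increases by $\alpha r_s(u) > \alpha\epsilon\,d(u)$. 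The work done by this push is $\Theta(d(u))$ (updating $u$ and its neighbors). Charging $d(u)$ units of work to the $\alpha\epsilon\,d(u)$ increase in $\|\bm p_s\|_1$, the cost per push is at most $\frac{1}{\alpha\epsilon}$ times the increase in $\|\bm p_s\|_1$. Summing over all pushes, the total work is at most $\frac{1}{\alpha\epsilon}$ times the total increase in $\|\bm p_s\|_1$ over the run. Since $\|\bm p_s\|_1 + \|\bm r_s\|_1$ is conserved at $1$ throughout (each push preserves it when $\beta = 0$), the final $\|\bm p_s\|_1 = 1 - \|\bm r_s\|_1$, so the total increase is exactly $1 - \|\bm r_s\|_1$, giving the stated bound $\frac{1 - \|\bm r_s\|_1}{\alpha\epsilon}$.

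The one subtlety is the second loop, where negative residuals with $r_s(u) < -\epsilon\, d(u)$ are pushed; there $\|\bm p_s\|_1$ may \emph{decrease}, so the naive ``charge to $\|\bm p_s\|_1$'' accounting does not directly apply, and I should instead argue directly that each such push decreases $\|\bm r_s\|_1$ by at least $\alpha\epsilon\, d(u)$ and costs $\Theta(d(u))$ work, so the work is at most $\frac{1}{\alpha\epsilon}$ times the total decrease in $\|\bm r_s\|_1$ during that loop; one then observes the residual can only reach a bounded $\ell_1$ budget (it starts this phase with $\|\bm r_s\|_1 \le 1$), closing the count. I expect this sign-handling in the second phase — making the potential argument uniform across both loops so the clean $\frac{1-\|\bm r_s\|_1}{\alpha\epsilon}$ bound survives — to be the main technical obstacle; everything else is bookkeeping. (Indeed, since the paper fixes $\beta = 0$ and in the intended application initializes with $\bm r_s = \bm 1_s \ge \bm 0$, one may also simply note that the second loop never triggers, making the first-loop analysis the whole story.)
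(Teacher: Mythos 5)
Your proposal is correct and follows essentially the same route as the paper's own proof: a telescoping/amortized bound on $\|\bm r_s\|_1$ (each push of cost $d(u)$ decreases the residual mass by at least $\alpha\epsilon\,d(u)$) for the running time, and the invariant of Lemma~\ref{lemma:1} combined with the undirected symmetry $d(u)\pi_u(v)=d(v)\pi_v(u)$ and $\|\bm\pi_u\|_1=1$ for the per-coordinate error. Your extra care about the negative-residual loop (and the observation that it never fires when $\bm r_s=\bm 1_s$) is a detail the paper's proof silently skips, but it does not change the argument.
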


The main challenge to directly use forward push algorithm to obtain high quality PPVs in our setting is that: 1) the quality $\bm p_s$ return by forward push algorithm  will critically depend on the precision parameter $\epsilon$ which unfortunately is unknown under our dynamic problem setting. Furthermore, the original update of per-edge event strategy proposed in \cite{zhang2016approximate} is not suitable for batch update between graph snapshots. \citet{guo2017parallel} propose to use a batch strategy, which is more practical in real-world scenario where there is a sequence of edge events between two consecutive snapshots. This motivates us to develop a new algorithm for dynamic PPVs and then use these PPVs to obtain high quality dynamic embedding vectors. 

\section{Proposed method}
\label{sec:proposed}

To obtain dynamic embedding vectors, the general idea is to obtain dynamic PPVs and then project these PPVs into embedding space by using two kernel functions \cite{weinberger2009feature,postuavaru2020instantembedding}.
In this section, we present our proposed method \textsc{DynamicPPE} where it contains two main components: 1) an adaptive precision strategy to control the estimation error of PPVs. We then prove that the time complexity of this dynamic strategy is still independent on $n$. With this quality guarantee, learned PPVs will be used as proximity vectors and be "projected" into lower dimensional space based on ideas of \textit{Verse} \cite{tsitsulin2018verse} and \textit{InstantEmbedding} \cite{postuavaru2020instantembedding}. We first show how can we get high quality PPVs and then present how use PPVs to obtain dynamic embeddings. Finally, we give the complexity analysis.

\subsection{Dynamic graph embedding for single batch}

For each batch update $\Delta G^t$, the key idea is to dynamically maintain PPVs where the algorithm updates the estimate from $\bm p_v^{t-1}$ to $\bm p_v^t$ and its residuals from $\bm r_v^{t-1}$ to $\bm r_v^t$. Our method is inspired from \citet{guo2017parallel} where they proposed to update a personalized contribution vector by using the local reverse push \footnote{One should notice that, for undirected graph, PPVs can be calculated by using the invariant property from the contribution vectors. However, the invariant property does not hold for directed graph. It means that one cannot use reverse local push to get a personalized PageRank vector directly. In this sense, using forward push algorithm is more general for our problem setting.}. The proposed dynamic single node embedding, \textsc{DynamicSNE} is illustrated in Algorithm \ref{algo:dynamic-sne}. It takes an update batch $\Delta \mathcal{G}^t$ (a sequence of edge events), a target node $s$ with a precision $\epsilon^t$, estimation vector of $s$ and residual vector as inputs. It then obtains an updated embedding vector of $s$ by the following three steps: 1) It first updates the estimate vector $\bm p_s^t$ and $\bm r_s^t$ from Line 2 to Line 9; 2) It then calls the forward local push method to obtain the updated estimations, $\bm p_s^t$; 3) We then use the hash kernel projection step to get an updated embedding. This projection step is from \textit{InstantEmbedding} where two universal hash functions are defined as $h_d: \mathbb{N} \rightarrow [d]$ and $h_{\operatorname{sgn}} : \mathbb{N} \rightarrow \{\pm 1\}$ \footnote{For example, in our implementation, we use MurmurHash \textcolor{blue}{\url{https://github.com/aappleby/smhasher}}}. Then the hash kernel based on these two hash functions is defined as $H_{h_{\operatorname{sgn}},h_d}(\bm x): \mathbb{R}^n \rightarrow \mathbb{R}^d$ where each entity $i$ is $\sum_{j \in h_d^{-1}(i)} x_j h_{\operatorname{sgn}}(j)$. Different from random projection used in RandNE \cite{zhang2018billion} and FastRP \cite{chen2019fast},  hash functions has $\mathcal{O}(1)$ memory cost while random projection based method has $\mathcal{O}(d n)$ if the Gaussian matrix is used. Furthermore, hash kernel keeps unbiased estimator for the inner product \cite{weinberger2009feature}.

In the rest of this section, we show that the time complexity is $\mathcal{O}(m \bar{d} / \epsilon)$ in average and the estimation error of learned PPVs measure by $\| \cdot \|_1$ can also be bounded. Our proof is based on the following lemma which follows from \citet{zhang2016approximate,guo2017parallel}.

\begin{lemma}
\label{lemma:6}
Given current graph $\mathcal{G}^t$ and an update batch $\Delta \mathcal{G}^t$, the total run time of the dynamic single node embedding, \textsc{DynamicSNE} for obtaining embedding $\bm w_s^t$ is bounded by the following
\begin{equation}
T_t \leq \frac{\|\bm r_s^{t-1}\|_1 - \|\bm r_s^t\|_1}{\alpha \epsilon^t} + \sum_{u\in \Delta \mathcal{G}^t}\frac{2-\alpha}{\alpha} \frac{p_s^{t-1}(u)}{d(u)}
\end{equation}
\end{lemma}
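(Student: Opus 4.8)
The plan is to account separately for the two distinct phases of \textsc{DynamicSNE}: (i) the residual-update loop that processes the $m'$ edge events in $\Delta\mathcal{G}^t$ (Lines 2--9), and (ii) the subsequent call to \textsc{ForwardPush} that re-stabilizes the estimate/residual pair. For phase (ii), I would invoke the running-time bound of Lemma~\ref{lemma:2} in its incremental form: \textsc{ForwardPush} started from an arbitrary $(\bm p_s,\bm r_s)$ terminates after a number of push operations proportional to the $\ell_1$ mass of residual it must clear, i.e. $O\!\bigl((\|\bm r_s\|_1^{\text{start}} - \|\bm r_s\|_1^{\text{end}})/(\alpha\epsilon^t)\bigr)$, since each \textsc{Push}$(u)$ on a node with $|r_s(u)|>\epsilon^t d(u)$ removes at least $\alpha\epsilon^t d(u)\ge \alpha\epsilon^t$ of residual $\ell_1$-mass (this is exactly the amortized argument behind Lemma~\ref{lemma:2}). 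Writing $\bm r_s^{t-1}$ for the residual carried over from the previous snapshot and $\bm r_s^{t}$ for the residual after \textsc{ForwardPush} returns, the cost of phase (ii) is at most $(\|\bm r_s^{t-1,+}\|_1 - \|\bm r_s^{t}\|_1)/(\alpha\epsilon^t)$, where $\bm r_s^{t-1,+}$ denotes the residual immediately after the edge-event patching of phase (i).

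The heart of the argument is then to bound the extra $\ell_1$ residual mass \emph{injected} by phase (i), i.e. $\|\bm r_s^{t-1,+}\|_1 - \|\bm r_s^{t-1}\|_1$, together with the direct push cost of phase (i) itself. For a single edge event $(u,v)$, the update rules (the analogue of the per-edge updates of \citet{zhang2016approximate}, adapted to batch form following \citet{guo2017parallel}) rewrite the entries of $\bm p_s$ and $\bm r_s$ at $u$ and $v$ so as to restore the invariant of Lemma~\ref{lemma:1} for the new degree $d(u)$; the magnitude of the correction is governed by $p_s^{t-1}(u)/d(u)$ (the per-edge "share" of $u$'s estimated mass), scaled by a factor that works out to $(2-\alpha)/\alpha$ after one accounts for both the removed/added term $(1-\alpha)p_s(u)/(\alpha d(u))$ on the affected neighbor and the compensating adjustment pushed back through the source. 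Summing this contribution over all $u$ appearing in $\Delta\mathcal{G}^t$ yields the second term $\sum_{u\in\Delta\mathcal{G}^t}\frac{2-\alpha}{\alpha}\frac{p_s^{t-1}(u)}{d(u)}$. Combining: total work $T_t \le$ (injected residual mass)$/(\alpha\epsilon^t)$ + (cleanup mass cleared)$/(\alpha\epsilon^t)$, and after telescoping the residual-mass terms the net bound collapses to $\bigl(\|\bm r_s^{t-1}\|_1 - \|\bm r_s^{t}\|_1\bigr)/(\alpha\epsilon^t) + \sum_{u\in\Delta\mathcal{G}^t}\frac{2-\alpha}{\alpha}\frac{p_s^{t-1}(u)}{d(u)}$, as claimed.

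I expect the main obstacle to be the bookkeeping in phase (i): one must verify that each edge-event patch exactly preserves the Lemma~\ref{lemma:1} invariant (so that the downstream \textsc{ForwardPush} is still computing the correct PPV), and simultaneously track precisely how much $\ell_1$ residual mass each patch creates, being careful about signs (insertions vs.\ deletions) and about the fact that $d(u)$ changes during the batch if $u$ is touched by several events. The cleanest route is to process the batch one edge at a time, maintain a loop invariant of the form "$\bm p_s,\bm r_s$ satisfy Lemma~\ref{lemma:1} with respect to the current intermediate graph", and charge the residual increment of each step to $p_s^{t-1}(u)/d(u)$; bounding the accumulated degree changes and showing the constant $(2-\alpha)/\alpha$ is the correct envelope is the delicate part, but it follows the same algebra as the per-edge analysis in \citet{zhang2016approximate}. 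The amortization in phase (ii) and the final telescoping are then routine.
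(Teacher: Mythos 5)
Your plan reconstructs from first principles what the paper simply outsources: the paper's entire proof is a citation to Lemma~14 of the extended version of \citet{zhang2016approximate} for the single-edge case, followed by a one-sentence monotonicity remark (positive residuals only increase estimates, negative residuals only decrease them) to argue that the work for a whole batch is no more than the sum of the hypothetical per-edge works. Your two-phase decomposition (patching cost plus a single \textsc{ForwardPush} cleanup charged against cleared residual mass) is a legitimate and arguably more self-contained route, and your phase~(ii) amortization is exactly the standard argument behind Lemma~\ref{lemma:2}. So the approaches are related but not the same: you replace the paper's ``per-edge bound $\times$ monotonicity'' argument with a direct $\ell_1$-mass-accounting argument over the whole batch.

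There are, however, two concrete gaps. First, your final combination step does not actually produce the stated bound. You write $T_t \le (\text{injected mass})/(\alpha\epsilon^t) + (\text{cleared mass})/(\alpha\epsilon^t)$ and claim this ``collapses'' to the lemma; but if the injected residual mass is bounded by $\sum_{u}\frac{2-\alpha}{\alpha}\frac{p_s^{t-1}(u)}{d(u)}$, then dividing it by $\alpha\epsilon^t$ leaves a second term of order $\sum_{u}\frac{2-\alpha}{\alpha^2\epsilon^t}\frac{p_s^{t-1}(u)}{d(u)}$, which differs from the lemma's second term by a factor of $1/(\alpha\epsilon^t)$. (The paper itself is inconsistent here: the displayed per-edge bound inside its proof carries the $1/(\alpha\epsilon^t)$ on $\Delta_t(s)$ while the lemma statement does not, so you have inherited an ambiguity rather than resolved it, but your derivation as written does not yield the inequality you are asked to prove.) Second, you charge each patch's residual injection to $p_s^{t-1}(u)/d(u)$ evaluated at the \emph{pre-batch} estimate, yet Lines~6--8 of Algorithm~\ref{algo:dynamic-sne} mutate $p_s(u)$ and $d(u)$ during the batch, so a node touched by several events is charged against a moving target; you flag this as ``the delicate part'' but give no mechanism to close it. This is precisely where the paper's monotonicity observation (or an explicit induction bounding the intermediate estimates by the final ones) is doing real work, and your proposal needs an analogous step before the summation over $u\in\Delta\mathcal{G}^t$ is justified.
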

\begin{proof}
We first make an assumption that there is only one edge update $(u,v,\operatorname{event})$ in $\Delta \mathcal{G}^t$, then based Lemma 14 of \cite{zhang2016approximate-extended}, the run time of per-edge update is at most:
\begin{equation}
\frac{\|\bm r_s^{t-1}\|_1 - \|\bm r_s^{t-1}\|_1}{\alpha \epsilon^t} + \frac{\Delta_t(s)}{\alpha \epsilon^t},
\end{equation}
where $\Delta_t(s) = \frac{2-\alpha}{\alpha} \frac{p_s^{t-1}(u)}{d(u)}$. Suppose there are $k$ edge events in $\Delta \mathcal{G}^t$. We still obtain a similar bound, by the fact that forward push algorithm has monotonicity property: the entries of estimates $ p_s^t(v)$ only increase when it pushes positive residuals (Line 2 and 3 of Algorithm \ref{algo:forward-local-push}). Similarly, estimates $p_s^t(v)$ only decrease when it pushes negative residuals (Line 4 and 5 of Algorithm \ref{algo:forward-local-push}). In other words, the amount of work for per-edge update is not less than the amount of work for per-batch update. Similar idea is also used in \cite{guo2017parallel}.
\end{proof}

\newcommand{\eqplus}{=\mathrel{+}}
\begin{algorithm}[H]
\caption{$\textsc{DynamicSNE}(\mathcal{G}^t, \Delta \mathcal{G}^t,  s,\bm p_s^{t-1},\bm r_s^{t-1}, \epsilon^t,\alpha)$}
\begin{algorithmic}[1]
\State \textbf{Input: } graph $\mathcal{G}^t, \Delta \mathcal{G}^t$, target node $s$, precision $\epsilon$, teleport $\alpha$
\For{$(u,v,\operatorname{op}) \in \Delta G^{t}$}
\If{$\operatorname{op}==\textsc{Insert}(u,v)$}
\State $\Delta_p = p_s^{t-1}(u) / (d(u)^t -1)$
\EndIf
\If{$\operatorname{op} == \textsc{Delete}(u,v)$}
\State $\Delta_p = -p_s^{t-1}(u) / (d(u)^t + 1)$
\EndIf
\State $p_s^{t-1}(u) \leftarrow p_s^{t-1}(u) + \Delta_p$
\State $r_s^{t-1}(u) \leftarrow r_s^{t-1}(u) - \Delta_p / \alpha$
\State $r_s^{t-1}(v) \leftarrow r_s^{t-1}(v) + \Delta_p / \alpha - \Delta_p$
\EndFor
\State $\bm p_s^t = \textsc{ForwardPush}(\bm p_s^{t-1},\bm r_s^{t-1},\mathcal{G}^t,\epsilon^t,\alpha)$
\State $\bm w_s^t = \bm 0$
\For{ $i \in \{ v : p_s^t(v) \ne 0, v \in \mathbb{V}^t\}$} 
\State $w_s^{t}(h_d(i)) \pluseq h_{\operatorname{sgn}}(i) \max\left(\log \left(p_s^t(i) n^t\right), 0\right)$
\EndFor
\end{algorithmic}
\label{algo:dynamic-sne}
\end{algorithm}

\begin{theorem}
Given any graph snapshot $\mathcal{G}^t$ and an update batch $\Delta G^t$ where there are $m_t$ edge events and suppose the precision parameter is $\epsilon^t$ and teleport probability is $\alpha$, \textsc{DynamicSNE} runs in $\mathcal{O}(m_t/\alpha^2 + m_t \bar{d}^t / (\epsilon\alpha^2) + m_t / (\epsilon\alpha))$ with $\epsilon^t = \epsilon / m_t$
\label{thm:run-time}
\end{theorem}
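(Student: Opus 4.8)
The plan is to combine Lemma~\ref{lemma:6} with the runtime bound of Lemma~\ref{lemma:2} for \textsc{ForwardPush}, and then account separately for the per-edge bookkeeping in Lines 2--9 of Algorithm~\ref{algo:dynamic-sne} and for the hash-kernel projection in Lines 11--14. So \textsc{DynamicSNE} has three cost centers: (i) the $m_t$ edge-event preprocessing updates to $\bm p_s^{t-1}, \bm r_s^{t-1}$; (ii) the call to \textsc{ForwardPush} on the perturbed residual vector; (iii) the sparse projection into $\mathbb{R}^d$. I expect (ii) to dominate and to be where the $\bar d^t$ factor enters.

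First I would bound (ii) using Lemma~\ref{lemma:6}: the work is at most $\frac{\|\bm r_s^{t-1}\|_1 - \|\bm r_s^t\|_1}{\alpha\epsilon^t} + \sum_{u \in \Delta\mathcal{G}^t} \frac{2-\alpha}{\alpha}\frac{p_s^{t-1}(u)}{d(u)}$. For the first term, I would use $\|\bm r_s^{t-1}\|_1 \le 1$ (which holds since residual mass is conserved/bounded under push and edge updates, starting from $\bm 1_s$) so this term is $\mathcal{O}(1/(\alpha\epsilon^t))$; but more carefully, each \textsc{Push} touches $d(u)$ neighbors, so the true cost is the number of push operations times the average degree, giving the $\bar d^t$ factor — i.e. $\mathcal{O}(\bar d^t/(\alpha\epsilon^t))$ per unit of residual-$\ell_1$ drained. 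Summing the per-edge contributions of Lemma~\ref{lemma:6} over the $m_t$ events, and using that $p_s^{t-1}(u)/d(u) \le \pi_s^{t-1}(u)/d(u) = \mathcal{O}(1)$ crudely (or more tightly bounded by the invariant), the second term contributes $\mathcal{O}(m_t/\alpha)$ push operations, hence $\mathcal{O}(m_t \bar d^t/\alpha)$ with the degree factor, but an extra $1/\alpha$ appears from the push cost accounting as in Lemma~\ref{lemma:2}. Then substituting $\epsilon^t = \epsilon/m_t$ into $\mathcal{O}(\bar d^t/(\alpha\epsilon^t)) = \mathcal{O}(m_t\bar d^t/(\alpha\epsilon))$, and folding in the extra $1/\alpha$ from how many push rounds are needed, yields the $m_t \bar d^t/(\epsilon\alpha^2)$ term. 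The $m_t/\alpha^2$ term comes from the residual-drain term $\frac{\|\bm r_s^{t-1}\|_1}{\alpha\epsilon^t} = \frac{m_t}{\alpha\epsilon}\cdot\|\bm r_s^{t-1}\|_1$ combined with the observation that after $m_t$ edge perturbations $\|\bm r_s^{t-1}\|_1$ can have grown by $\mathcal{O}(m_t \cdot \text{something}/\alpha)$, and the $m_t/(\epsilon\alpha)$ term is the leftover lower-order contribution; I would track constants loosely and just show each summand is absorbed.

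For (i), each edge event does $\mathcal{O}(1)$ arithmetic (updating three coordinates), so that is $\mathcal{O}(m_t)$, dominated by the above. For (iii), the projection loop runs over the support of $\bm p_s^t$; since \textsc{ForwardPush} with precision $\epsilon^t$ produces a vector whose support size is $\mathcal{O}(1/(\alpha\epsilon^t)) = \mathcal{O}(m_t/(\alpha\epsilon))$ (each nonzero coordinate required at least one push), and each hash evaluation is $\mathcal{O}(1)$, this is also absorbed into the stated bound.

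The main obstacle I anticipate is making the residual-$\ell_1$ accounting rigorous: Lemma~\ref{lemma:6} gives a telescoping-style bound per batch, but I need to argue that $\|\bm r_s^{t-1}\|_1$ (the residual \emph{entering} this batch, after the edge perturbations in Lines 2--9) stays $\mathcal{O}(m_t/\alpha)$ or smaller, since each \textsc{Insert}/\textsc{Delete} injects residual mass of order $|\Delta_p|/\alpha \le p_s^{t-1}(u)/(\alpha(d(u)\mp 1))$ into coordinates $u$ and $v$. Bounding $\sum_{\text{events}} |\Delta_p|/\alpha$ requires using $\sum_u p_s^{t-1}(u) \le \|\bm p_s^{t-1}\|_1 \le 1$ together with the fact that each event references a specific $(u,v)$, so the crude bound is $\mathcal{O}(m_t/\alpha)$ but a careful version should give something tighter; reconciling this with the claimed three-term bound, and deciding exactly how the two powers of $1/\alpha$ are distributed across the terms, is the delicate part. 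Everything else is substitution of $\epsilon^t = \epsilon/m_t$ and collecting big-$\mathcal{O}$ terms.
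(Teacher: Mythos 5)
For reference, the paper's entire proof of this theorem is a one-line delegation: it invokes Lemma~\ref{lemma:6} and then cites Theorem~12 of \cite{zhang2016approximate-extended} for the amortized accounting over the $m_t$ edge events. Your proposal attempts to carry out that accounting from scratch, which is more ambitious than what the paper does, but the mechanism you give for the dominant $m_t\bar{d}^t/(\epsilon\alpha^2)$ term is not correct. You introduce the $\bar{d}^t$ factor by multiplying ``the number of push operations'' by the average degree. However, the standard local-push charging argument --- the one the paper itself uses in the appendix proof of Lemma~\ref{lemma:2} --- already charges the cost $d(u)$ of a push at $u$ against the residual mass $\epsilon^t\alpha\, d(u)$ that this push drains, so the total \textsc{ForwardPush} work is bounded by the $\ell_1$-decrease of the residual divided by $\epsilon^t\alpha$, with no additional degree factor. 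Inserting an extra $\bar{d}^t$ there is double counting, and it is not where the degree dependence in the theorem comes from.

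The $\bar{d}^t$ actually has to enter through the second term of Lemma~\ref{lemma:6}: one must bound the total residual injected by the $m_t$ edge events, $\sum_{u\in\Delta\mathcal{G}^t}\frac{2-\alpha}{\alpha}\frac{p_s^{t-1}(u)}{d(u)}$, and then convert that injected residual into push work at rate $1/(\alpha\epsilon^t)=m_t/(\alpha\epsilon)$. Your crude estimate $p_s^{t-1}(u)/d(u)=\mathcal{O}(1)$ per event yields $\mathcal{O}(m_t/\alpha)$ injected residual and hence $\mathcal{O}(m_t^2/(\alpha^2\epsilon))$ work --- quadratic in $m_t$ and far weaker than the claimed bound. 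The tighter estimate needs the amortized accounting of Theorem~12 of \cite{zhang2016approximate-extended}, in which the sum over edge endpoints is controlled by observing that a node $u$ of degree $d(u)$ contributes $d(u)\cdot p_s(u)/d(u)$ across the edges incident to it (so the degrees cancel against $\|\bm p_s\|_1\le 1$), together with the symmetry property (Proposition~\ref{proposition:symmetry-property}); this is precisely the step you flag as ``the delicate part'' and leave open, including how the two powers of $1/\alpha$ are distributed among the three terms. Your handling of parts (i) and (iii) --- the $\mathcal{O}(m_t)$ bookkeeping and the $\mathcal{O}(\min\{n,1/(\alpha\epsilon^t)\})$ projection --- and the substitution $\epsilon^t=\epsilon/m_t$ are fine, but the heart of the theorem is missing; the paper avoids it only by citation.
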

\begin{proof}
Based lemma \ref{lemma:6}, the proof directly follows from Theorem 12 of \cite{zhang2016approximate-extended}.
\end{proof}

The above theorem has an important difference from the previous one \cite{zhang2016approximate}. We require that the precision parameter will be small enough so that $\| \bm p_s^t - \bm \pi_s^t\|_1$ can be bound ( will be discussed later). As we did the experiments in Figure \ref{fig:epsilon-fix}, the fixed epsilon will make the embedding vector bad. We propose to use a dynamic precision parameter, where $\epsilon^t \sim \mathcal{O}(\epsilon^\prime /m_t)$, so that the $\ell_1$-norm can be properly bounded. Interestingly, this high precision parameter strategy will not case too high time complexity cost from $\mathcal{O}(m/\alpha^2)$ to $\mathcal{O}(m \bar{d} / (\epsilon\alpha^2))$. The bounded estimation error is presented in the following theorem.

\begin{theorem}[Estimation error]
Given any node $s$, define the estimation error of PPVs learned from \textsc{DynamicSNE} at time $t$ as $\| \bm p_s^t - \bm \pi_s^t \|_1$, if we are given the precision parameter $\epsilon^t = \epsilon / m_t$, the estimation error can be bounded by the following
\begin{equation}
\| \bm p_s^t - \bm \pi_s^t \|_1 \leq \epsilon,
\end{equation}
where we require $\epsilon \leq 2$ \footnote{By noticing that $\|\bm p_s^t - \bm \pi_s^t\|_1 \leq \|\bm p_s^t\|_1 + \| \bm \pi_s^t\|_1 \leq 2$, any precision parameter larger than 2 will be meaningless. } and $\epsilon$ is a global precision parameter of \textsc{DynamicPPE} independent on $m_t$ and $n_t$.
\end{theorem}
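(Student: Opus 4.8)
The plan is to bound $\|\bm p_s^t - \bm \pi_s^t\|_1$ by relating the $\ell_1$ error directly to the per-coordinate error guarantee of Lemma~\ref{lemma:2} together with the structure of the residual vector after \textsc{ForwardPush}. After the call to \textsc{ForwardPush} with precision $\epsilon^t$ on graph $\mathcal{G}^t$, the termination condition guarantees $|r_s^t(u)| \leq \epsilon^t d(u)^t$ for every node $u$. The natural starting point is the invariant property of Lemma~\ref{lemma:1}, which gives $\pi_s^t(u) = p_s^t(u) + \sum_{v} r_s^t(v) \pi_v^t(u)$, hence
\begin{equation}
\| \bm p_s^t - \bm \pi_s^t \|_1 = \sum_{u \in \mathbb{V}^t} \left| \sum_{v \in \mathbb{V}^t} r_s^t(v) \pi_v^t(u) \right| \leq \sum_{v \in \mathbb{V}^t} |r_s^t(v)| \sum_{u \in \mathbb{V}^t} \pi_v^t(u) = \sum_{v \in \mathbb{V}^t} |r_s^t(v)|,
\end{equation}
where the last equality uses that each PPV $\bm \pi_v^t$ is a probability distribution, so $\sum_u \pi_v^t(u) = 1$. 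Thus it suffices to bound $\|\bm r_s^t\|_1$.

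Next I would bound $\|\bm r_s^t\|_1$ using the termination guarantee coordinatewise: $\|\bm r_s^t\|_1 = \sum_{v} |r_s^t(v)| \leq \sum_{v} \epsilon^t d(v)^t = \epsilon^t \cdot 2 m_t$, since the sum of degrees in an undirected graph equals twice the number of edges. Wait—here I must be careful about what $m_t$ denotes: in the theorem statement $m_t$ is the number of \emph{edge events} in the batch $\Delta \mathcal{G}^t$, not the total number of edges in $\mathcal{G}^t$, so I should restate the argument with $\sum_v d(v)^t = 2|\mathbb{E}^t|$ and reconcile the notation (most likely the intended reading is $\epsilon^t = \epsilon/(2|\mathbb{E}^t|)$, or the constant $2$ is absorbed; I would clarify this in the final writeup and use whichever convention the paper fixes). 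Substituting $\epsilon^t = \epsilon / m_t$ (with $m_t$ the appropriate edge count) then yields $\|\bm r_s^t\|_1 \leq \epsilon$, and combined with the display above this gives $\|\bm p_s^t - \bm \pi_s^t\|_1 \leq \epsilon$, which is the claim. The condition $\epsilon \leq 2$ is consistent with the footnote's observation that $\|\bm p_s^t - \bm \pi_s^t\|_1 \leq \|\bm p_s^t\|_1 + \|\bm \pi_s^t\|_1 \leq 2$, so no precision finer than this trivial bound is needed.

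The main obstacle I anticipate is not the inequality chain itself, which is short, but justifying that the invariant property of Lemma~\ref{lemma:1} and the termination guarantee $|r_s^t(v)| \leq \epsilon^t d(v)^t$ genuinely hold for the vector $\bm r_s^t$ produced by \textsc{DynamicSNE} rather than by a fresh run of \textsc{ForwardPush} from $(\bm 0, \bm 1_s)$. \textsc{DynamicSNE} first performs the edge-event corrections (Lines 2--12), which modify $\bm p_s^{t-1}$ and $\bm r_s^{t-1}$, and only then calls \textsc{ForwardPush}; one has to check that the edge-update rules preserve the invariant $\pi_s(u) = p_s(u) + \sum_v r_s(v)\pi_v(u)$ with respect to the \emph{new} graph $\mathcal{G}^t$ (this is exactly the content of the update derivation in \cite{zhang2016approximate}, which I would cite), and that \textsc{ForwardPush} always terminates and restores the two-sided residual bound regardless of the sign pattern of the input residuals (handled by the two while-loops in Algorithm~\ref{algo:forward-local-push}). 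Once these are in place, the $\ell_1$ bound follows immediately, and I would also note the clean takeaway: the per-coordinate precision $\epsilon^t$ must scale inversely with the edge count precisely to compensate for the $\sum_v d(v)^t$ factor introduced when passing from the coordinatewise guarantee to the global $\ell_1$ norm.
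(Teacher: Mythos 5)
Your proof is correct, but it takes a genuinely different route from the paper's. The paper applies Lemma~\ref{lemma:2} coordinatewise --- $|p_s^t(u) - \pi_s^t(u)| \leq \epsilon^t d^t(u)$ for each $u$ --- and then simply sums over all nodes; that coordinatewise guarantee is itself derived (in the appendix) from the invariant property \emph{combined with} the symmetry property $d(u)\pi_u(v) = d(v)\pi_v(u)$, which holds only for undirected graphs. You instead go directly from the invariant property to the global bound $\|\bm p_s^t - \bm \pi_s^t\|_1 \leq \|\bm r_s^t\|_1$ using only the fact that each $\bm\pi_v^t$ is a probability distribution, and then invoke the termination condition $|r_s^t(v)| \leq \epsilon^t d(v)^t$. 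This bypasses the symmetry property entirely, so your argument is marginally more general and arguably cleaner; both routes converge to the same final sum $\sum_v \epsilon^t d(v)^t$. Two remarks. First, you are right to flag the factor-of-two issue: $\sum_v d(v)^t = 2|\mathbb{E}^t|$, whereas the paper's proof writes $\sum_u \epsilon^t d^t(u) = \epsilon^t m_t$; the paper silently absorbs this constant (or implicitly counts each undirected edge twice), and the same slack is present in both proofs, so it does not distinguish them --- it only rescales $\epsilon$ by a factor of $2$. Second, your closing caveat about verifying that the edge-event corrections in \textsc{DynamicSNE} preserve the invariant with respect to $\mathcal{G}^t$ before \textsc{ForwardPush} is called is a genuine prerequisite that the paper also leaves to the cited prior work, so you are not missing anything the paper supplies.
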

\begin{proof}
Notice that for any node $u$, by Lemma \ref{lemma:2}, we have the following inequality
\begin{equation}
| \pi_s^t(u) - \pi_s^t(u) | \leq \epsilon d^t(u). \nonumber
\end{equation}
Summing all these inequalities over all nodes $u$, we have
\begin{align*}
\| \bm p_s^t - \bm \pi_s^t \|_1 &= \sum_{u\in \mathbb{V}^t} \left| p_s^t(u) - \pi_s^t(u)\right| \\
&\leq \sum_{u \in \mathbb{V}^t} \epsilon^t d^t(u) = \epsilon^t m_t = \frac{\epsilon}{m_t} m_t = \epsilon.
\end{align*}
\end{proof}

\begin{figure}[H]
\centering
\includegraphics[width=8cm,height=3.5cm]{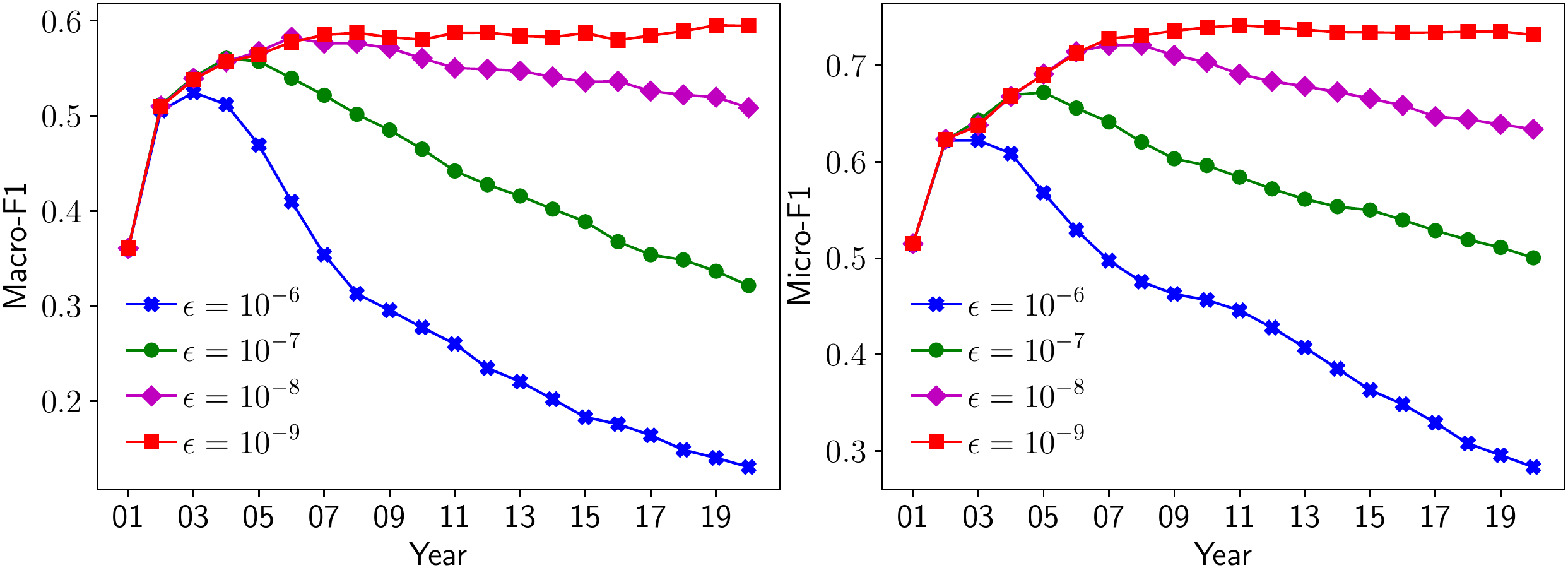}
\caption{$\epsilon$ as a function of year for the task of node classification on the English Wikipedia graph. Each line corresponds to a fixed precision strategy of \textsc{DynamicSNE}. Clearly, when the precision parameter $\epsilon$ decreases, the performance of node classification improves.}
\label{fig:epsilon-fix}
\end{figure}

The above theorem gives estimation error guarantee of $\bm p_s^t$, which is critically important for learning high quality embeddings. First of all, the dynamic precision strategy is inevitable because the precision is unknown parameter for dynamic graph where the number of nodes and edges could increase dramatically over time. To demonstrate this issue, we conduct an experiments on the English Wikipedia graph where we learn embeddings over years and validate these embeddings by using node classification task. As shown in Figure \ref{fig:epsilon-fix}, when we use the fixed parameter, the performance of node classification is getting worse when the graph is getting bigger. This is mainly due to the lower quality of PPVs. Fortunately, the adaptive precision parameter $\epsilon / m_t$ does not make the run time increase dramatically. It only dependents on the average degree ${\bar d}^t$. In practice, we found $\epsilon =0.1$ are sufficient for learning effective embeddings.

\subsection{\textsc{DynamicPPE}}

Our finally algorithm \textsc{DynamicPPE} is presented in Algorithm \ref{algo:dynamic-ppe}. At every beginning, estimators $\bm p_s^t$ are set to be zero vectors and residual vectors $\bm r_s^t$ are set to be unit vectors with mass all on one entry (Line 4). The algorithm then call the procedure \textsc{DynamicSNE} with an empty batch as input to get initial PPVs for all target nodes \footnote{For the situation that some nodes of $S$ has not appeared in $\mathcal{G}^t$ yet, it checks every batch update until all nodes are initialized. } (Line 5). From Line 6 to Line 9, at each snapshot $t$, it gets an update batch $\Delta \mathcal{G}^t$ at Line 7 and then calls \textsc{DynamicSNE} to obtain the updated embeddings for every node $v$.

\begin{algorithm}[H]
\caption{$\textsc{DynamicPPE}(\mathcal{G}_0, S,\epsilon,\alpha)$}
\begin{algorithmic}[1]
\State \textbf{Input: } initial graph $\mathcal{G}^0$, target set $S$, global precision $\epsilon$, teleport probability $\alpha$
\State $t=0$
\For{$s \in S := \{v_1,v_2,\ldots, v_k\}$}
\State $\bm p_s^t = \bm 0,\quad \bm r_s^t = \bm 1_s$
\State \textsc{DynamicSNE}$(\mathcal{G}^0, \emptyset, s, \bm p_s^t, \bm r_s^t, 1/ m_0, \alpha)$
\EndFor
\For{$t \in \{1,2,\ldots, T\}$}
\State \text{ read a sequence of edge events $\Delta \mathcal{G}^t := \mathcal{G}^t \backslash \mathcal{G}^{t-1}$}
\For{$s \in S := \{v_1,v_2,\ldots, v_k\}$}
\State $ \bm w_s^t = \textsc{DynamicSNE}(\mathcal{G}^t, \Delta \mathcal{G}^t, s, \bm p_s^{t-1}, \bm r_s^{t-1}, \epsilon/m_t,\alpha)$
\EndFor
\EndFor
\State \textbf{return} $\bm W_s^t = [\bm w_s^1,\bm w_s^2,\ldots, \bm w_s^T], \forall s\in S$. 
\end{algorithmic}
\label{algo:dynamic-ppe}
\end{algorithm}

Based on our analysis, \textsc{DynamicPPE} is an dynamic version of \textit{InstantEmbedding}. Therefore, \textsc{DynamicPPE} has two key properties observed in \cite{postuavaru2020instantembedding}: \textit{locality} and \textit{global consistency}. The embedding quality is guaranteed from the fact that \textit{InstantEmbedding} implicitly factorizes the proximity matrix based on PPVs \cite{tsitsulin2018verse}.

\subsection{Complexity analysis}

\textbf{Time complexity}\quad The overall time complexity of \textsc{DynamicPPE} is the $k$ times of the run time of \textsc{DynamicSNE}. We summarize the time complexity of \textsc{DynamicPPE} as in the following theorem
\begin{theorem} 
The time complexity of \textsc{DynamicPPE} for learning a subset of $k$ nodes is $\mathcal{O}(k \frac{m_t}{\alpha^2} + k \frac{m_t \bar{d}^t}{\alpha^2} + \frac{m_t}{\epsilon} + k T \min\{n, \frac{m}{\epsilon \alpha}\})$
\end{theorem}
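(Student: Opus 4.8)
The plan is to bound the total running time of \textsc{DynamicPPE} by summing, over all $k$ target nodes and all $T$ snapshots, the per-call cost of \textsc{DynamicSNE} established in Theorem~\ref{thm:run-time}, and then to add the separately-accountable cost of the hash-kernel projection step. First I would recall that for a single target node $s$ and a single batch $\Delta\mathcal{G}^t$ with $m_t$ edge events, Theorem~\ref{thm:run-time} with the choice $\epsilon^t=\epsilon/m_t$ gives running time $\mathcal{O}(m_t/\alpha^2 + m_t\bar d^t/(\epsilon\alpha^2) + m_t/(\epsilon\alpha))$ for the update-plus-\textsc{ForwardPush} part. Multiplying by $k$ nodes and noting that the $m_t/(\epsilon\alpha)$ term is dominated by $m_t\bar d^t/(\epsilon\alpha^2)$ (since $\bar d^t\ge 1$ and $\alpha\le 1$), the PPV-maintenance cost across all nodes for snapshot $t$ is $\mathcal{O}\bigl(k\,m_t/\alpha^2 + k\,m_t\bar d^t/\alpha^2\bigr)$ up to the $1/\epsilon$ factor on the second term; summing over $t$ and writing $m$ for the final/total edge count absorbs the per-snapshot $m_t$ into the stated $m_t$ (or $m$) terms $k\,m_t/\alpha^2 + k\,m_t\bar d^t/\alpha^2 + m_t/\epsilon$ that appear in the theorem.

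Next I would account for the projection loop (Lines 14--16 of Algorithm~\ref{algo:dynamic-sne}), which iterates over the support of $\bm p_s^t$, i.e.\ over $\mathrm{nnz}(\bm p_s^t)$ nonzero entries, doing $\mathcal{O}(1)$ hash work each. The key observation is a sparsity bound on the support of a \textsc{ForwardPush} output: since every pushed node $u$ had $r_s(u)>\epsilon^t d(u)\ge \epsilon^t$ and the total residual mass that can be converted is at most $1$, and more precisely by the run-time bound of Lemma~\ref{lemma:2} the number of push operations (hence the number of touched coordinates) is $\mathcal{O}(1/(\epsilon^t\alpha))=\mathcal{O}(m_t/(\epsilon\alpha))$; trivially it is also at most $n$. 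Hence $\mathrm{nnz}(\bm p_s^t)\le \min\{n,\, \mathcal{O}(m_t/(\epsilon\alpha))\}$, and the projection cost over all $k$ nodes and all $T$ snapshots is $\mathcal{O}\bigl(kT\min\{n, m/(\epsilon\alpha)\}\bigr)$, which is exactly the last term of the claimed bound. Adding the two contributions and collecting terms yields the stated complexity $\mathcal{O}\bigl(k\,\tfrac{m_t}{\alpha^2} + k\,\tfrac{m_t\bar d^t}{\alpha^2} + \tfrac{m_t}{\epsilon} + kT\min\{n,\tfrac{m}{\epsilon\alpha}\}\bigr)$.

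The main obstacle I anticipate is the bookkeeping around $m_t$ versus the cumulative $m$, and the telescoping of the residual-norm terms. The per-call bound in Lemma~\ref{lemma:6} contains $(\|\bm r_s^{t-1}\|_1-\|\bm r_s^t\|_1)/(\alpha\epsilon^t)$, which telescopes cleanly across $t$ only if $\epsilon^t$ were constant; with $\epsilon^t=\epsilon/m_t$ varying in $t$ one must argue (as the paragraph after Theorem~\ref{thm:run-time} does) that each term is individually $\mathcal{O}(m_t/(\alpha^2\epsilon))$ using $\|\bm r_s\|_1\le 1$ plus the amortized $\sum_{u\in\Delta\mathcal{G}^t}\frac{2-\alpha}{\alpha}\frac{p_s^{t-1}(u)}{d(u)}$ contribution, which is where the extra $\bar d^t$ factor enters. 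I would also be careful that the statement writes some terms with $m_t$ (per-snapshot) and the last with $m$ (cumulative): the intended reading is a per-snapshot cost that is then understood to hold at the worst-case/terminal snapshot, and I would state explicitly that summing the per-snapshot costs over the $T$ rounds is absorbed into these terms under the convention $m\ge m_t$ for all $t$. Modulo this accounting, the proof is a direct composition of Theorem~\ref{thm:run-time}, Lemma~\ref{lemma:2}, and the $\mathcal{O}(1)$-per-coordinate cost of hash kernels.
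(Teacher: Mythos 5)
Your proposal is correct and follows essentially the same route as the paper, whose own proof is just the one-line remark that one applies Theorem~\ref{thm:run-time} to each of the $k$ nodes at each snapshot and sums the resulting costs. You additionally make explicit what the paper leaves implicit — that the $kT\min\{n,\frac{m}{\epsilon\alpha}\}$ term comes from the hash-projection loop over the $\mathcal{O}(1/(\epsilon^t\alpha))$ nonzeros of $\bm p_s^t$ (a sparsity bound the paper only invokes in its space-complexity discussion) — and you correctly flag the sloppy bookkeeping between the per-snapshot $m_t$ and cumulative $m$ and the dropped $1/\epsilon$ and $\alpha$ factors in the stated terms, which are artifacts of the theorem statement rather than gaps in your argument.
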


\begin{proof}
We follow Theorem \ref{thm:run-time} and summarize all run time together to get the final time complexity.
\end{proof}

\noindent\textbf{Space complexity}\quad The overall space complexity has two parts: 1) $\mathcal{O}(m)$ to store the graph structure information; and 2) the storage of keeping nonzeros of $\bm p_s^t$ and $\bm r_s^t$. From the fact that local push operation \cite{andersen2006local}, the number of nonzeros in $\bm p_s^t$ is at most $\frac{1}{\epsilon \alpha}$. Thus, the total storage for saving these vectors are $\mathcal{O}(k \min\{n,\frac{m }{\epsilon \alpha}\})$. Therefore, the total space complexity is $\mathcal{O}(m + k \min(n,\frac{m}{\epsilon\alpha}))$.

\noindent\textbf{Implementation}\quad Since learning the dynamic node embedding for any node $v$ is independent with each other, \textsc{DynamicPPE} is  are easy to parallel. Our current implementation can take advantage of multi-cores and compute the embeddings of $S$ in parallel.

\section{Experiments}
\label{sec:experiments}

To demonstrate the effectiveness and efficiency of \textsc{DynamicPPE}, in this section, we first conduct experiments on several small and large scale real-world dynamic graphs on the task of node classification, followed by a case study about changes of Chinese cities in Wikipedia graph during the COVID-19 pandemic.

\subsection{Datasets}

We compile the following three real-world dynamic graphs, more details can be found in Appendix \ref{appendix::data}. 

\noindent \textbf{Enwiki20} {English Wikipedia Network\quad} We collect the internal Wikipedia Links (WikiLinks) of English Wikipedia from the beginning of Wikipedia, January 11th, 2001, to December 31, 2020 \footnote{We collect the data from the dump \url{https://dumps.wikimedia.org/enwiki/20210101/}}. The internal links are extracted using a regular expression proposed in \cite{consonni2019wikilinkgraphs}. During the entire period, we collection 6,151,779 valid articles\footnote{A valid Wikipedia article must be in the 0 namespace}. We generated the WikiLink graphs only containing edge insertion events. We keep all the edges existing before Dec. 31 2020, and sort the edge insertion order by the creation time. There are  6,216,199 nodes and 177,862,656 edges during the entire period. Each node either has one label (\textit{Food, Person, Place,...}) or no label.

\noindent \textbf{Patent} ({US Patent Graph}) \quad The citation network of US patent\cite{hall2001nber} contains 2,738,011 nodes with 13,960,811 citations range from year 1963 to 1999. For any two patents $u$ and $v$, there is an edge $(u,v)$ if the patent $u$ cites patent $v$. Each patent belongs to six different types of patent categories. We extract a small weakly-connected component of 46,753 nodes and 425,732 edges with timestamp, called \textit{Patent-small}.

\noindent \textbf{Coauthor} \quad We extracted the co-authorship network from the Microsoft Academic Graph \cite{sinha2015overview} dumped on September 21, 2019. We collect the papers with less than six coauthors, keeping those who has more than 10 publications, then build undirected edges between each coauthor with a timestamp of the publication date. In addition, we gather temporal label (e.g.: \textit{Computer Science, Art, ... }) of authors based on their publication's field of study. Specifically we assign the label of an author to be the field of study where s/he published the most up to that date. The original graph contains 4,894,639 authors and 26,894,397 edges ranging from year 1800 to 2019. We also sampled a small connected component containing 49,767 nodes and 755,446 edges with timestamp, called \textit{Coauthor-small}.

\noindent \textbf{Academic} The co-authorship network is from the academic network \cite{zhou2018dynamic,tang2008arnetminer} where it contains 51,060 nodes and 794,552 edges. The nodes are generated from 1980 to 2015. According to the node classification setting in \citet{zhou2018dynamic}, each node has either one binary label or no label.

\subsection{Node Classification Task}
\noindent\textbf{Experimental settings\quad}
We evaluate embedding quality on binary classification for Academic graph (as same as in \cite{zhou2018dynamic}), while using multi-class classification for other tasks. We use balanced logistic regression with $\ell_2$ penalty in on-vs-rest setting, and report the Macro-F1 and Macro-AUC (ROC) from 5 repeated trials with 10\% training ratio on the labeled nodes in each snapshot, excluding dangling nodes. Between each snapshot, we insert new edges and keep the previous edges.

We conduct the experiments on the aforementioned small and large scale graph. 
In small scale graphs, we calculate the embeddings of all nodes and compare our proposed method (\textbf{DynPPE.}) against other state-of-the-art models from three categories\footnote{Appendix \ref{appendix::parameter} shows the hyper-parameter settings of baseline methods}.
1) Random walk based static graph embeddings (\textbf{Deepwalk}\footnote{\url{https://pypi.org/project/deepwalk/}} \cite{perozzi2014deepwalk}, \textbf{Node2Vec}\footnote{\url{https://github.com/aditya-grover/node2vec}} \cite{grover2016node2vec}); 
2) Random Projection-based embedding method which supports online update: \textbf{RandNE}\footnote{\url{https://github.com/ZW-ZHANG/RandNE/tree/master/Python}} \cite{zhang2018billion}; 
3) Dynamic graph embedding method: \textbf{DynamicTriad} (\textbf{DynTri.}) \footnote{\url{https://github.com/luckiezhou/DynamicTriad}} \cite{zhou2018dynamic}  which models the dynamics of the graph and optimized on link prediction.

\begin{table}[!h]
\centering
\caption{Node classification task on the \textit{Academic, Patent Small, Coauthor Small} graph on the final snapshot}
\small
\begin{tabular}{p{0.05\textwidth}|p{0.06\textwidth}|p{0.03\textwidth}|p{0.03\textwidth}|p{0.03\textwidth}|p{0.03\textwidth}|p{0.03\textwidth}|p{0.03\textwidth} }
\toprule
  \multicolumn{2}{l|}{}   & \multicolumn{2}{l|}{Academic} & \multicolumn{2}{l|}{Patent Small} & \multicolumn{2}{l}{Coauthor Small}\\
  \multicolumn{2}{l|}{} &  F1 &  AUC  &  F1 &  AUC  &  F1 &  AUC \\
 \cline{1-8}
\multicolumn{8}{c}{$d=128$} \\
\cline{1-8}
  \multirow{2}{0mm}{Static method} & Node2Vec &0.833 &\textbf{0.975} &0.648 &0.917 &0.477 &\textbf{0.955}\\
\cline{2-8}
  &Deepwalk &\textbf{0.834} &\textbf{0.975} &\textbf{0.650} &\textbf{0.919} &\textbf{0.476} &0.950\\
\cline{1-8}
  \multirow{4}{0mm}{Dynamic method}   &DynTri. &0.817 &0.969 &0.560 &0.866 &0.435 &0.943\\
\cline{2-8}
    &RandNE &0.587 &0.867 &0.428 &0.756 &0.337 &0.830\\
\cline{2-8}
    &DynPPE. &0.808 &0.962 &0.630 &0.911 &0.448 &0.951\\
\hline
    \multicolumn{8}{c}{$d=512$} \\
\hline
  \multirow{2}{0mm}{Static method} & Node2Vec &0.841 &\textbf{0.975} &0.677 &0.931 &0.486 &0.955 \\
\cline{2-8}
  &Deepwalk &\textbf{0.842} &\textbf{0.975} &0.680 &0.931 &0.495 &0.955 \\
\cline{1-8}
  \multirow{4}{0mm}{Dynamic method}   &DynTri. &0.811 &0.965 &0.659 &0.915 &0.492 &0.952 \\
\cline{2-8}
    &RandNE &0.722 &0.942 &0.560 &0.858 &0.493 &0.895 \\
\cline{2-8}
    &DynPPE. &\textbf{0.842} &0.973 &\textbf{0.682} & \textbf{0.934} &\textbf{0.509} &\textbf{0.958}\\
 \bottomrule
\end{tabular}
\label{tab:nc-small-table}
\end{table}

\begin{table}[!h]
\caption{Total CPU time for small graphs (in second). RandNE-I is with orthogonal projection, the performance is slightly better, but the running time is significantly increased. RandNE-II is without orthogonal projection.}

\centering
\begin{tabular}{p{0.08\textwidth}|p{0.08\textwidth}|p{0.12\textwidth}|p{0.12\textwidth}}
\toprule
  & Academic & Patent Small & Coauthor Small \\
\hline 
Deepwalk\footnotemark & 498211.75& 181865.56 & 211684.86 \\
\hline 
Node2vec & 4584618.79& 2031090.75 & 1660984.42 \\
\hline 
 DynTri. & 247237.55 & 117993.36 & 108279.4 \\
\hline 
 RandNE-I & 12732.64 & 9637.15 & 8436.79 \\
\hline 
 RandNE-II &  1583.08 & 9208.03 & 177.89 \\
\hline 
 DynPPE. & 18419.10 & 3651.59 & 21323.74 \\
\bottomrule
\end{tabular}
\label{tab:runtime-small}
\end{table}
\footnotetext{ We ran static graph embedding methods over a set of sampled snapshots and estimate the total CPU time for all snapshots.}

\begin{figure*}[h!]
	\centering
	\begin{subfigure}{.33\textwidth}
		\includegraphics[width=5cm]{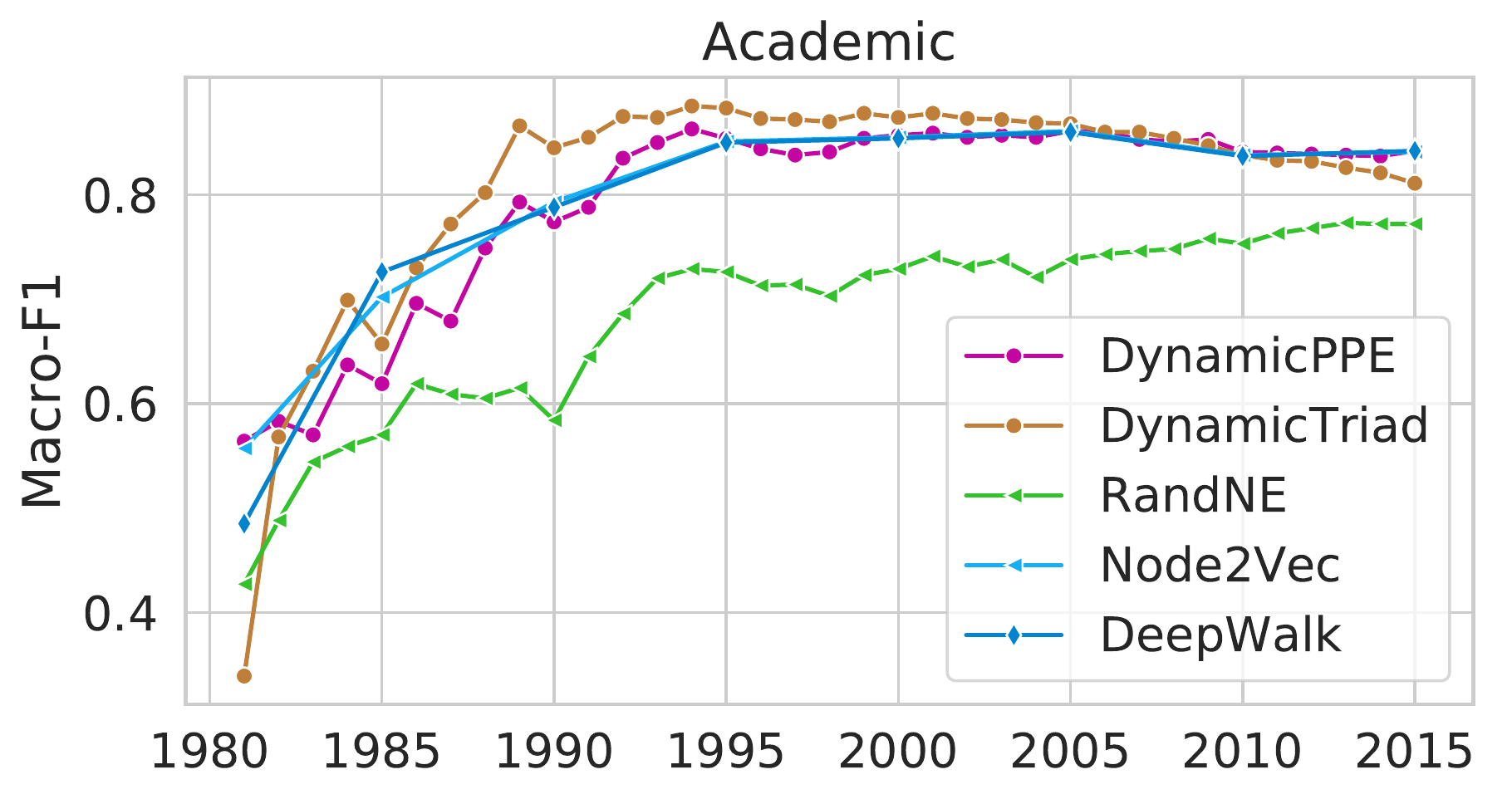}%
	\end{subfigure}
		\begin{subfigure}{.33\textwidth}
		\includegraphics[width=5cm]{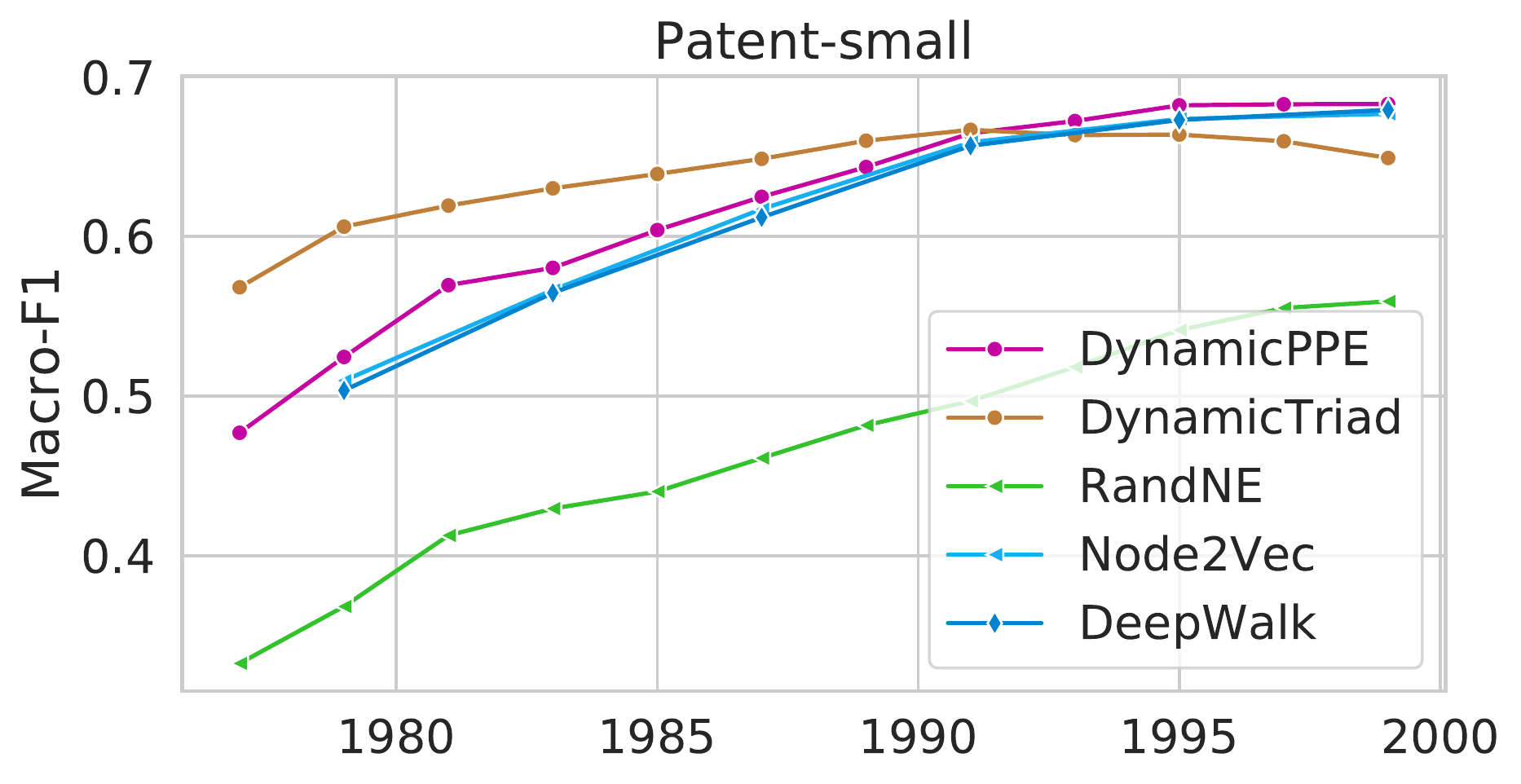}%
	\end{subfigure}
	\begin{subfigure}{.33\textwidth}
		\includegraphics[width=5cm]{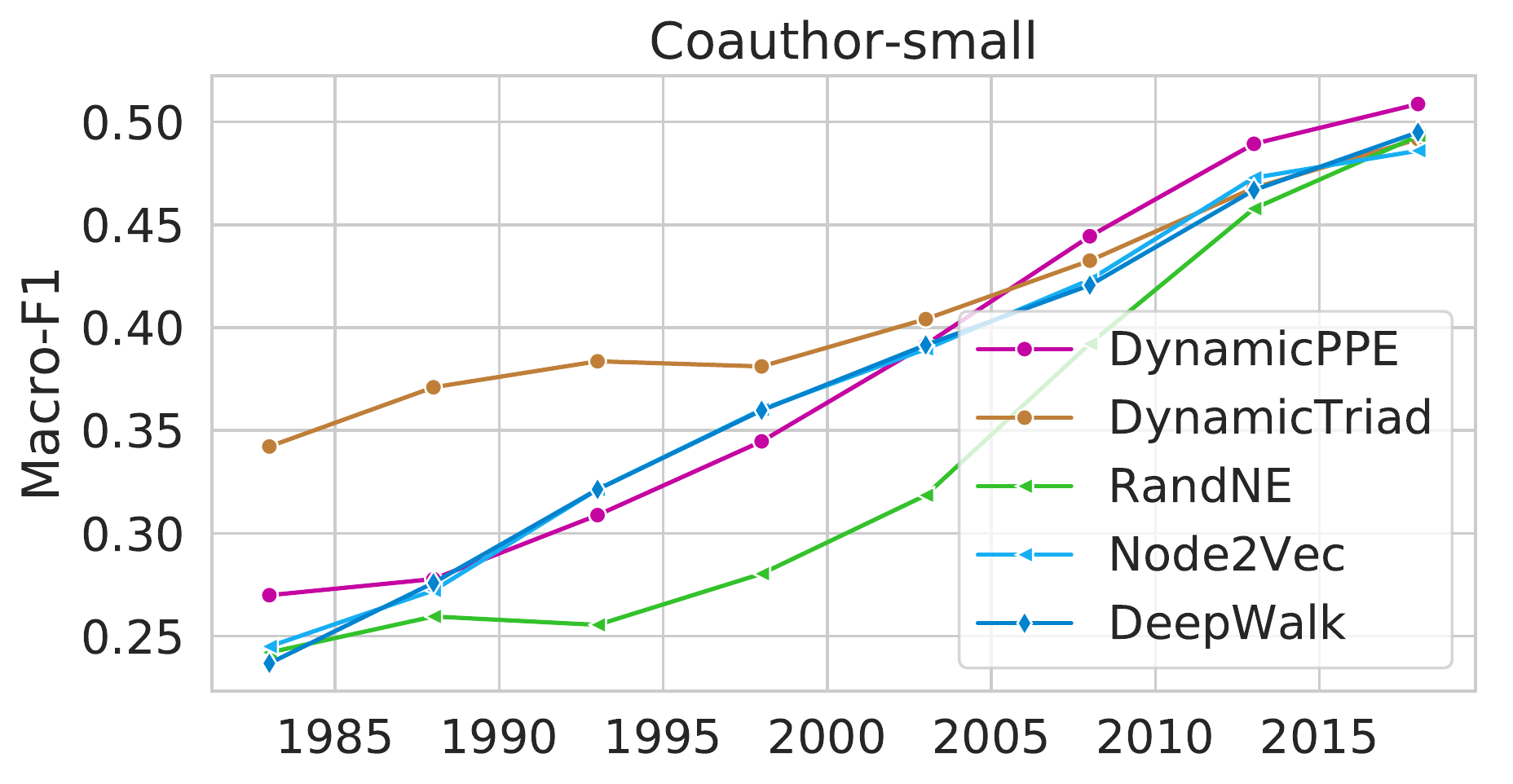}%
	\end{subfigure}
	
	\begin{subfigure}{.33\textwidth}
		\includegraphics[width=5cm]{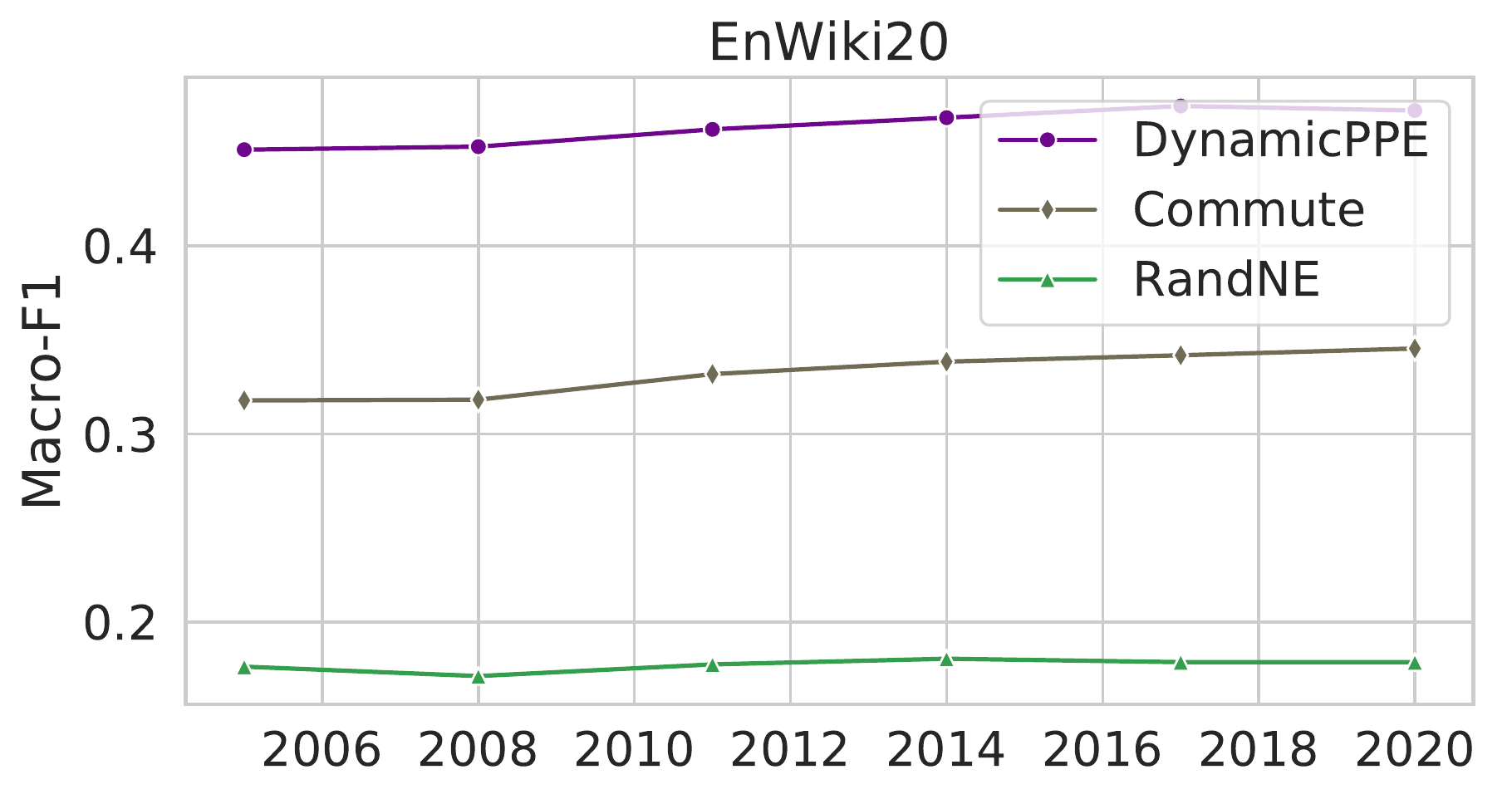}%
	\end{subfigure}
	\begin{subfigure}{.33\textwidth}
		\includegraphics[width=5cm]{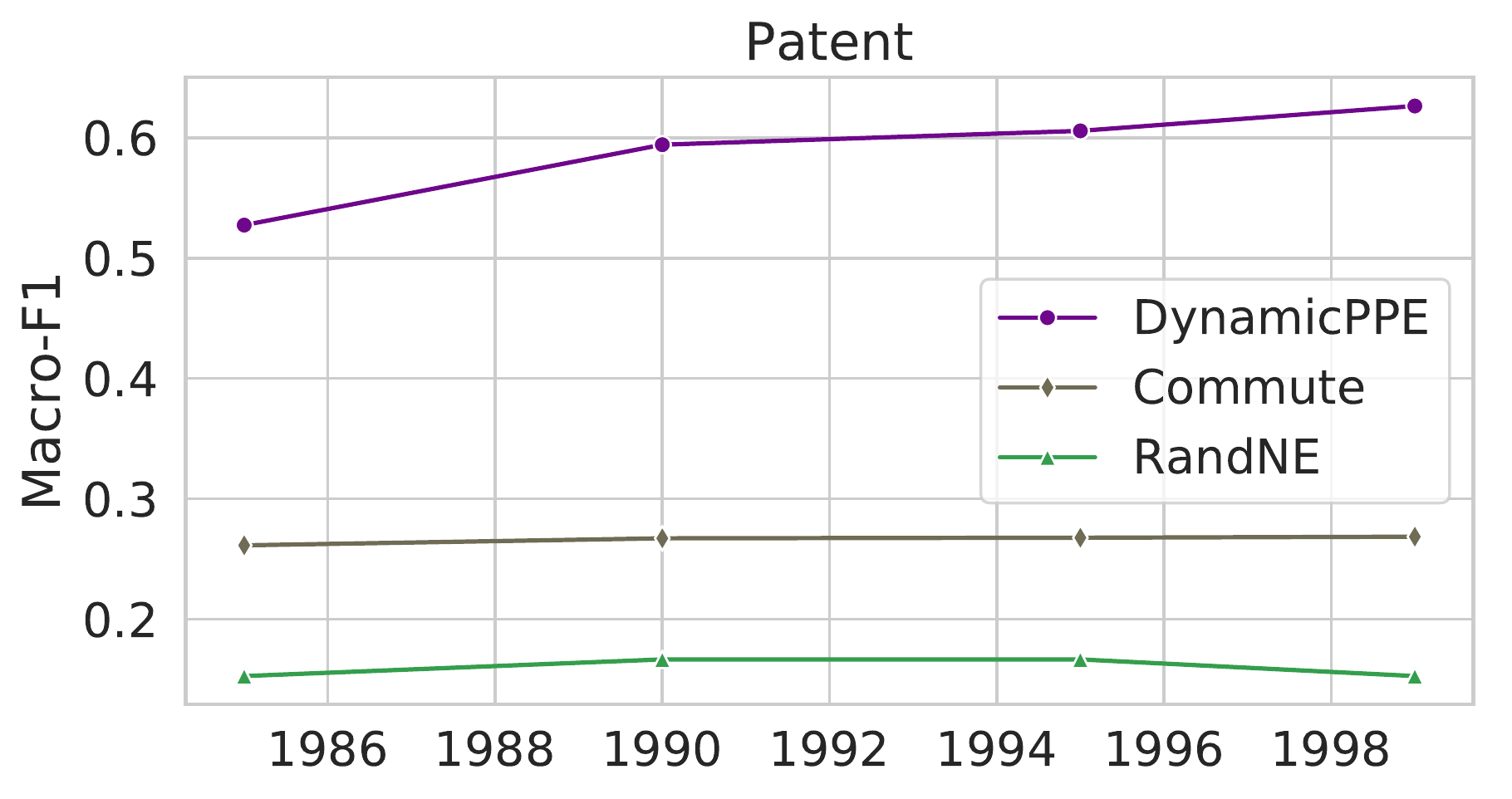}%
	\end{subfigure}
	\begin{subfigure}{.33\textwidth}
		\includegraphics[width=5cm]{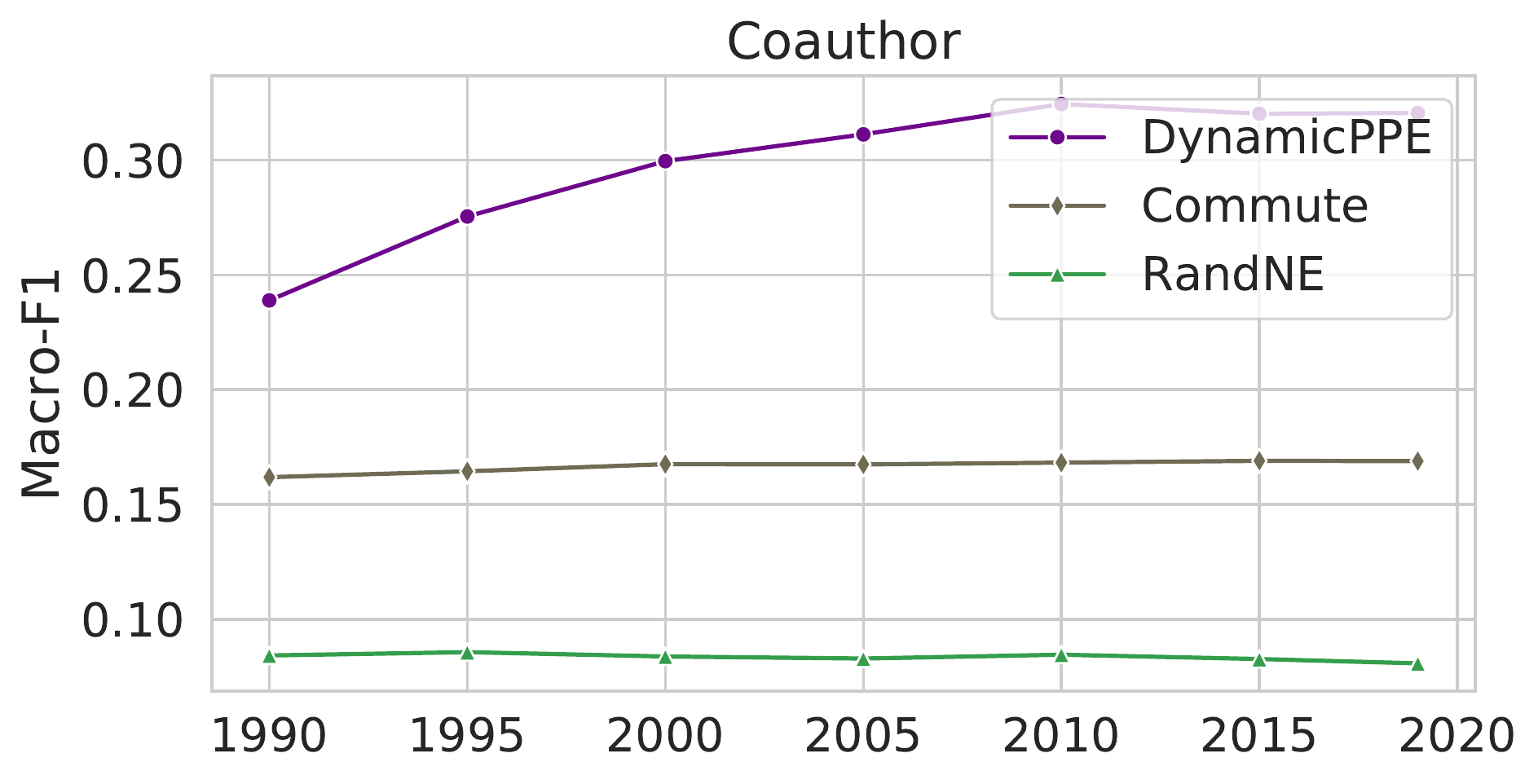}%
	\end{subfigure}

    \caption{Macro-F1 scores of node classification as a function of time. The results of the small and large graphs are on the first and second row respectively (dim=512). Our proposed methods achieves the best performance on the last snapshot when all edges arrived and performance curve matches to the static methods as the graph becomes more and more complete.}
    \label{fig:nc-full-macro-f1-over-time}
\end{figure*}

Table \ref{tab:nc-small-table} shows the classification results on the final snapshot. When we restrict the dimension to be 128, we observe that static methods outperform all dynamic methods. However, the static methods independently model each snapshot and the running time grows with number of snapshots, regardless of the number of edge changes.
In addition, our proposed method (DynPPE.) outperforms other dynamic baselines, except that in the \textit{academic} graph, where DynTri. is slightly better. However, their CPU time is 13 times more than ours as shown in Table \ref{tab:runtime-small}. 
According to the Johnson-Lindenstrauss lemma\cite{johnson1984extensions,dasgupta1999elementary}, we suspect that the poor result of RandNE is partially caused by the dimension size. As we increase the dimension to 512, we see a great performance improvement of RandNE. Also, our proposed method takes the same advantage, and outperform all others baselines in F1-score. Specifically, the increase of dimension makes hash projection step (Line 12-13 in Algorithm \ref{algo:dynamic-sne}) retain more information and yield better embeddings. We attach the visualization of embeddings in Appendix.\ref{tsne-embedding}.

The first row in the Figure \ref{fig:nc-full-macro-f1-over-time} shows the F1-scores in each snapshot when the dimension is 512. We observe that in the earlier snapshots where many edges are not arrived, the performance of DynamicTriad \cite{zhou2018dynamic} is better. One possible reason is that it models the dynamics of the graph, thus the node feature is more robust to the "missing" of the future links. While other methods, including ours, focus on an online feature updates incurred by the edge changes without explicitly modeling the temporal dynamics of a graph. Meanwhile, the performance curve of our method matches to the static methods, demonstrating the embedding quality of the intermediate snapshots is comparable to the state-of-the-art.

Table \ref{tab:runtime-small} shows the CPU time of each method (Dim=512). As we expected, static methods is very expensive as they calculate embeddings for each snapshot individually. Although our method is not blazingly fast compared to RandNE, it has a good trade-off between running time and embedding quality, especially without much hyper-parameter tuning. Most importantly, it can be easily parallelized by distributing the calculation of each node to clusters.

We also conduct experiment on large scale graphs (\textit{EnWiki20, Patent, Coauthor}). We keep track of the vertices in a subset containing $|S|=3,000$ nodes randomly selected from the first snapshot in each dataset, and similarly evaluate the quality of the embeddings of each snapshot on the node classification task. Due to scale of the graph, we compare our method against \textbf{RandNE} \cite{zhang2018billion} and an fast heuristic \textbf{Algorithm \ref{algo:commute}}. 
 Our method can calculate the embeddings for a subset of useful nodes only, while other methods have to calculate the embeddings of all nodes, which is not necessary under our scenario detailed in Sec. \ref{sec:case-study}. The second row in Figure \ref{fig:nc-full-macro-f1-over-time} shows that our proposed method has the best performance. 

\begin{table}[!h]
\caption{Total CPU time for large graphs (in second) }
\centering
\begin{tabular}{p{0.15\textwidth}|p{0.08\textwidth}|p{0.08\textwidth}|p{0.08\textwidth}}\toprule 
  & Enwiki20 & Patent & Coauthor \\
\hline 
 Commute & 6702.1 & 639.94 & 1340.74 \\
\hline 
 RandNE-II & 47992.81 & 6524.04 & 20771.19 \\
\hline 
 DynPPE. & 1538215.88 & 139222.01 & 411708.9 \\
\hline 
 DynPPE (Per-node) & 512.73 & 46.407 & 137.236 \\
 \bottomrule
\end{tabular}
\label{tab:runtime-large}
\end{table}

Table \ref{tab:runtime-large} shows the total CPU time of each method (Dim=512). Although total CPU time of our proposed method seems to be the greatest, the average CPU time for one node (as shown in row 4) is significantly smaller. This benefits a lot of downstream applications where only a subset of nodes are interesting to people in a large dynamic graph. For example, if we want to monitor the weekly embedding changes of a single node (e.g., the city of Wuhan, China) in English Wikipedia network from year 2020 to 2021, we can have the results in roughly 8.5 minutes. 
Meanwhile, other baselines have to calculate the embeddings of all nodes, and this expensive calculation may become the bottleneck in a downstream application with real-time constraints. To demonstrate the usefulness of our method, we conduct a case study in the following subsection.

\subsection{Change Detection}
\label{sec:case-study}

Thanks to the contributors timely maintaining Wikipedia, we believe that the evolution of the Wikipedia graph reflects how the real world is going on. We assume that when the structure of a node greatly changes, there must be underlying interesting events or anomalies. Since the structural changes can be well-captured by graph embedding, we use our proposed method to investigate whether anomalies happened to the Chinese cities during the COVID-19 pandemic (from Jan. 2020 to Dec. 2020).

\begin{table*}[!ht]
\caption{The top cities ranked by the z-score along time. The corresponding news titles are from the news in each time period. $d(v)$ is node degree, $\Delta d(v)$ is the degree changes from the previous timestamp.}
\centering
\small
\begin{tabular}{p{0.05\textwidth}|p{0.082\textwidth}|p{0.03\textwidth}|p{0.035\textwidth}|p{0.05\textwidth}|p{0.68\textwidth}}
\toprule 
 Date & City & $d(v)$ & $\Delta d(v)$ & Z-score & Top News Title \\
\hline 
 1/22/20 & Wuhan & 2890 & 54 & 2.210 & NBC: "New virus prompts U.S. to screen passengers from Wuhan, China" \\
\hline 
 2/2/20 & Wuhan & 2937 & 47 & 1.928 & WSJ: "U.S. Sets Evacuation Plan From Coronavirus-Infected Wuhan" \\
\hline 
 2/13/20 & Hohhot & 631 & 20 & 1.370 & Poeple.cn: "26 people in Hohhot were notified of dereliction of duty for prevention and control, and the director of the Health Commission was removed" (Translated from Chinese). \\
\hline 
 2/24/20 & Wuhan & 3012 & 38 & 2.063 & USA Today: "Coronavirus 20 times more lethal than the flu? Death toll passes 2,000" \\
\hline 
 3/6/20 & Wuhan & 3095 & 83 &  1.723 & Reuters: "Infelctions may drop to zero by end-March in Wuhan: Chinese government expert" \\
\hline 
 3/17/20 & Wuhan & 3173 & 78 & 1.690 & NYT: "Politicians Use of 'Wuhan Virus' Starts a Debate Health Experets Wanted to Avoid" \\
\hline 
 3/28/20 & Zhangjiakou & 517  & 15 & 1.217 & "Logo revealed for Freestyle Ski and Snowboard World Championships in Zhangjiakou" \\
\hline 
 4/8/20 & Wuhan & 3314  & 47               & 2.118 & CNN:"China lifts 76-day lockdown on Wuhan as city reemerges from conronavirus crisis" \\
\hline 
 4/19/20 & Luohe          & 106 & 15 & 2.640 & Forbes: "The Chinese Billionaire Whose Company Owns Troubled Pork Processor Smithfield Foods"  \\
\hline 
 4/30/20 & Zunhua & 52 & 17 & 2.449 & XINHUA: "Export companies resume production in Zunhua, Hebei" \\
\hline 
 5/11/20  & Shulan & 88 & 46      & 2.449 & CGTN: "NE China's Shulan City to reimpose community lockdown in 'wartime' battle against COVID-19" \\
 \bottomrule
\end{tabular}
\label{tab:z-score-city-changes}
\end{table*}

\noindent \textbf{Changes of major Chinese Cities\quad} We target nine Chinese major cities (\textit{Shanghai, Guangzhou, Nanjing, Beijing, Tianjin, \textbf{Wuhan}, Shenzhen, Chengdu, Chongqing)} and keep track of the embeddings in a 10-day time window.  From our prior knowledge, we expect that \textit{Wuhan} should greatly change since it is the first reported place of the COVID-19 outbreak in early Jan. 2020.

\begin{figure}[!h]
	\centering
	\begin{subfigure}{.23\textwidth}
		\includegraphics[width=4.3cm]{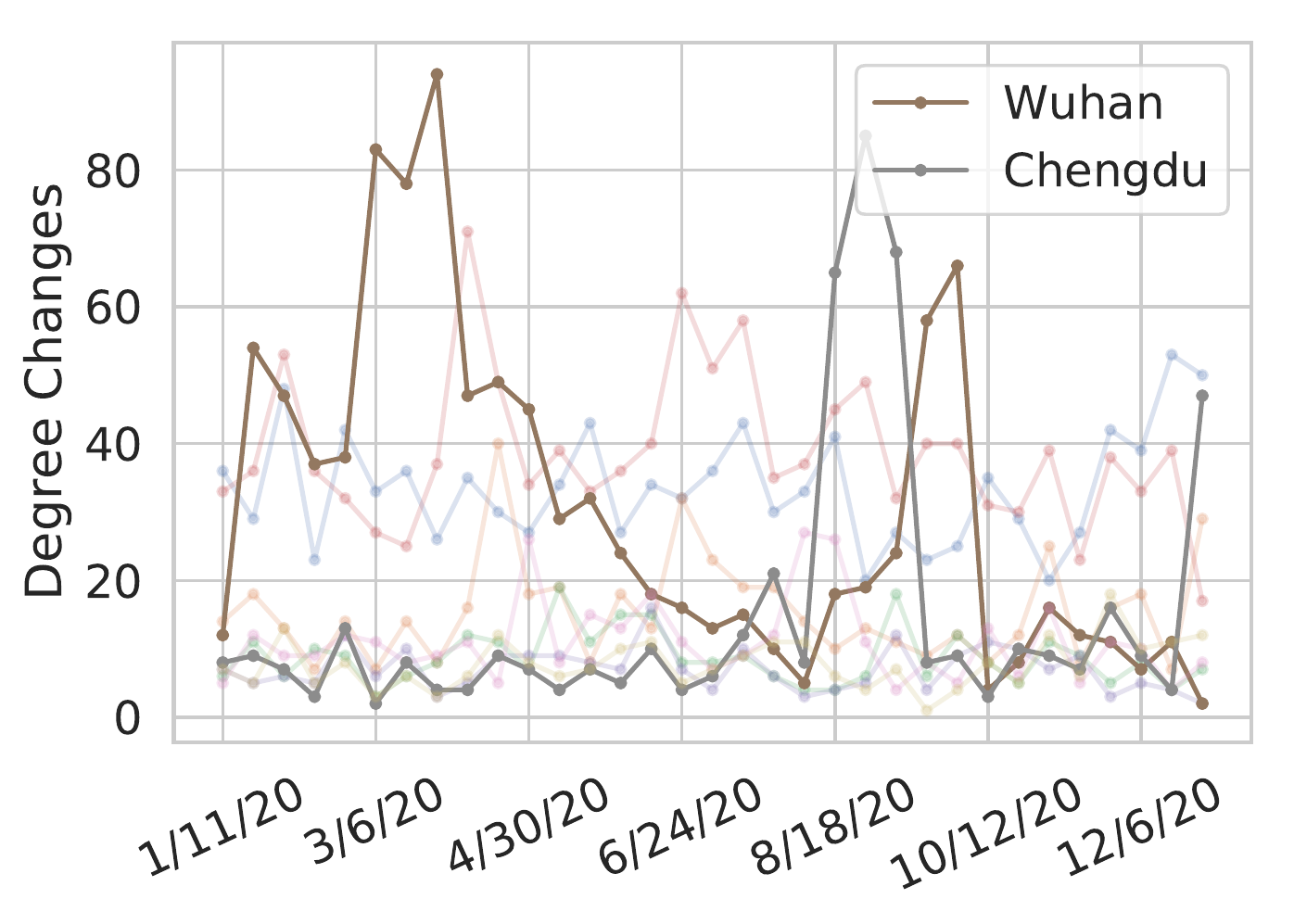}%
	\end{subfigure}
	\begin{subfigure}{.23\textwidth}
		\includegraphics[width=4.8cm]{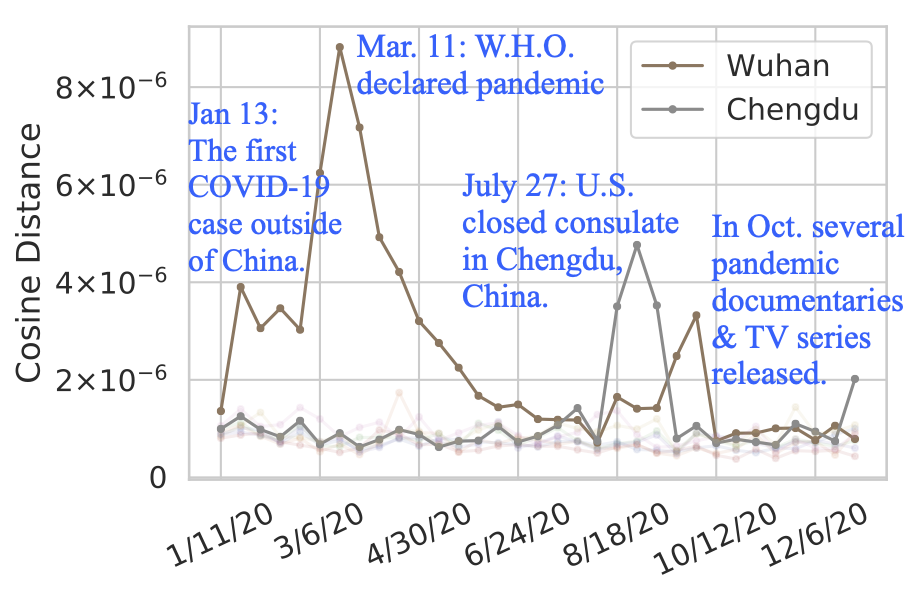}%
	\end{subfigure}
    \caption{The changes of major Chinese cities in 2020. \textit{Left}:Changes in Node Degree. \textit{Right}: Changes in Cosine distance }
    \label{fig:cs-changes}
\end{figure}

Figure.\ref{fig:cs-changes}(a) shows the node degree changes of each city every 10 days. The curve is quite noisy, but we can still see several major peaks from \textit{Wuhan} around 3/6/20 and \textit{Chengdu} around 8/18/20. When using embedding changes \footnote{Again, the embedding movement $\operatorname{Dist}(\cdot,\cdot)$ is defined as $1-\cos(\cdot,\cdot)$}  as the measurement,  Figure.\ref{fig:cs-changes} (b) provides a more clear view of the changed cities. We observe strong signals from the curve of \textit{Wuhan}, correlating to the initial COVID-19 outbreak and the declaration of pandemic \footnote{COIVD-19: \url{https://www.who.int/news/item/27-04-2020-who-timeline---covid-19}}. 
In addition, we observed an peak from the curve of \textit{Chengdu} around 8/18/20 when U.S. closed the consulate in Chengdu, China, reflecting the U.S.-China diplomatic tension\footnote{US Consulate:  \url{https://china.usembassy-china.org.cn/embassy-consulates/chengdu/}}.

\noindent \textbf{Top changed city along time\quad} We keep track of the embedding movement of 533 Chinese cities from year 2020 to 2021 in a 10-day time window, and filter out the inactive records by setting the threshold of degree changes (e.g. greater than 10). The final results are 51 Chinese cities from the major ones listed above and less famous cities like \textit{Zhangjiakou, Hohhot, Shulan, ...}.

Furthermore, we define the z-score of a target node $u$ as $Z_t(u)$ based on the embedding changes within a specific period of time from $t-1$ to $t$.
\begin{align*}
 Z_t(u|u\in S) &= \frac{Dist(w_u^{t}, w_u^{t-1}) - \mu  }{\sigma} \\
    \mu = \frac{1}{|S|} \sum_{u' \in S} Dist(w_{u'}^{t}, w_{u'}^{t-1})&,\  \ 
    \sigma = \sqrt{\frac{1}{|S|} \sum_{u' \in S} (Dist(w_{u'}^{t} w_{u'}^{t-1}) - \mu)^2 } \\
\end{align*}

In Table \ref{tab:z-score-city-changes}, we list the highest ranked cities by the z-score from Jan.22 to May 11, 2020. In addition, we also attach the top news titles corresponding to the city within each specific time period. We found that Wuhan generally appears more frequently as the situation of the pandemic kept changing. Meanwhile, we found the appearance of \textit{Hohhot} and \textit{Shulan} reflects the time when COVID-19 outbreak happened in those cities. We also discovered cities unrelated to the pandemic. For example, \textit{Luohe}, on 4/19/20, turns out to be the city where the headquarter of the organization which acquired Smithfield Foods (as mentioned in the news). In addition, \textit{Zhangjiakou}, on 3/28/20, is the city, which will host World Snowboard competition, released the Logo of that competition.

\section{Discussion and Conclusion}
\label{sec:conclusion}

In this paper, we propose a new method to learn dynamic node embeddings over large-scale dynamic networks for a subset of interesting nodes. Our proposed method has time complexity that is linear-dependent on the number of edges $m$ but independent on the number of nodes $n$. This makes our method applicable to applications of subset representations on very large-scale dynamic graphs. For the future work, as shown in \citet{trivedi2019dyrep}, there are two dynamics on dynamic graph data, \textit{structure evolution} and dynamics of \textit{node interaction}.  It would be interesting to study how one can incorporate dynamic of node interaction into our model.  It is also worth to study how different version of local push operation affect the performance of our method.

\begin{acks}
This work was partially supported by NSF grants IIS-1926781, IIS-1927227, IIS-1546113 and OAC-1919752.
\end{acks}

\bibliographystyle{ACM-Reference-Format}
\bibliography{references.bib}

\clearpage
\appendix

\section{Proof of lemmas}

To prove Lemma \ref{lemma:1}, we first introduce the property of uniqueness of PPR $\bm \pi_s$ for any $s$.

\begin{proposition}[Uniqueness of PPR \citep{andersen2007using}]
\label{prop:1}
For any starting vector $\bm 1_s$, and any constant $\alpha \in (0,1]$, there is a unique vector $\bm \pi_s$ satisfying (\ref{equ:ppr}).
\end{proposition}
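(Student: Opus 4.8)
The plan is to read (\ref{equ:ppr}) as a linear system and exploit that $\bm W^t = \bm D_t^{-1}\bm A_t$ is row-stochastic. Treating $\bm \pi_s$ as a row vector, (\ref{equ:ppr}) is equivalent to
\[
\bm \pi_s \bigl(\bm I - (1-\alpha)\bm W^t\bigr) = \alpha\, \bm 1_s ,
\]
so both existence and uniqueness amount to the invertibility of $\bm I - (1-\alpha)\bm W^t$. I would establish this directly by an $\ell_1$-contraction argument rather than invoking general resolvent theory.

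For uniqueness, suppose $\bm \pi_s$ and $\bm \pi_s'$ both satisfy (\ref{equ:ppr}) and put $\bm \delta := \bm \pi_s - \bm \pi_s'$; subtracting the two identities gives $\bm \delta = (1-\alpha)\,\bm \delta\, \bm W^t$. The key estimate is that right multiplication by a row-stochastic matrix is non-expansive in the $\ell_1$ norm: for any row vector $\bm x$,
\[
\|\bm x\, \bm W^t\|_1 = \sum_j \Bigl| \sum_i x_i\, W^{t}_{ij}\Bigr| \le \sum_i |x_i| \sum_j W^{t}_{ij} = \|\bm x\|_1 ,
\]
since $W^{t}_{ij} \ge 0$ and $\sum_j W^{t}_{ij} = 1$ (row sums of $\bm D_t^{-1}\bm A_t$ equal $1$ for an undirected graph without isolated vertices). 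Hence $\|\bm \delta\|_1 \le (1-\alpha)\|\bm \delta\|_1$, and since $0 < \alpha \le 1$ gives $1-\alpha < 1$, this forces $\bm \delta = \bm 0$, i.e. $\bm \pi_s = \bm \pi_s'$.

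For existence, $(1-\alpha)\bm W^t$ has $\ell_1$ operator norm at most $1-\alpha < 1$, so the Neumann series $\sum_{k\ge 0}(1-\alpha)^k (\bm W^t)^k$ converges; one verifies directly that $\bm \pi_s := \alpha\, \bm 1_s \sum_{k\ge 0}(1-\alpha)^k (\bm W^t)^k$ solves (\ref{equ:ppr}), because multiplying this series on the right by $(1-\alpha)\bm W^t$ and adding $\alpha\, \bm 1_s$ reproduces it term by term. The degenerate case $\alpha = 1$ is immediate: the homogeneous relation becomes $\bm \delta = \bm 0$ and the series collapses to $\bm 1_s$.

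I do not expect a genuine obstacle here: the result is essentially routine linear algebra. The only points needing care are keeping the row-vector convention consistent with the way $\bm W^t$ is placed in (\ref{equ:ppr}) (so that transposes line up), and the standing assumption that every vertex has positive degree so that $\bm D_t^{-1}$, and hence $\bm W^t$, is well defined. If dangling vertices were permitted, $\bm W^t$ would merely become sub-stochastic, which only strengthens the contraction and leaves both parts of the argument intact.
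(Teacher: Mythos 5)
Your proof is correct, but it takes a different route from the paper's. The paper rewrites (\ref{equ:ppr}) as the linear system $(\bm I - (1-\alpha)\bm D^{-1}\bm A)\,\bm \pi_s = \alpha \bm 1_s$ and observes that the coefficient matrix is strictly diagonally dominant (each diagonal entry is $1$ while the off-diagonal row sum is $(1-\alpha)\sum_{j\neq i} 1/d(i) = 1-\alpha$), then cites the Levy--Desplanques theorem to conclude invertibility, which settles existence and uniqueness in one stroke. You instead prove that right multiplication by the row-stochastic $\bm W^t$ is $\ell_1$-non-expansive, so $(1-\alpha)\bm W^t$ is a strict contraction; uniqueness follows from $\|\bm \delta\|_1 \le (1-\alpha)\|\bm \delta\|_1$ and existence from the Neumann series. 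The two arguments are essentially the same fact viewed through different lemmas, but yours buys a little more: it is self-contained (no citation to a diagonal-dominance theorem needed), it produces the explicit representation $\bm \pi_s = \alpha\,\bm 1_s \sum_{k\ge 0}(1-\alpha)^k (\bm W^t)^k$, which is the standard geometric-sum-over-walk-lengths interpretation of PPR that underlies the push algorithms used later in the paper, and it degrades gracefully to sub-stochastic $\bm W^t$ (dangling nodes). The paper's version is terser. One nit: both proofs implicitly assume every vertex has positive degree so that $\bm D^{-1}$ exists; you state this explicitly, the paper does not.
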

\begin{proof}
Recall the PPR equation
\begin{equation}
\bm \pi_s = \alpha \bm 1_s + (1-\alpha) \bm D^{-1}\bm A \bm \pi_s. \nonumber
\end{equation}
We can rewrite it as $(\bm I - (1-\alpha)\bm D^{-1} \bm A) \bm \pi_s = \alpha  \bm 1_s$. Notice the fact that matrix $\bm I - (1-\alpha)\bm D^{-1} \bm A$ is  strictly diagonally dominant matrix. To see this, for each $i \in \mathbb{V}$, we have $1 - (1-\alpha)\sum _{j\neq i}|1/d(i)| = \alpha > 0$. By \cite{lei2003schur}, strictly diagonally dominant matrix is always invertible.
\end{proof}

\begin{proposition}[Symmetry property \citep{lofgren2015bidirectional}]
Given any undirected graph $\mathcal{G}$, for any $\alpha \in (0,1)$ and for any node pair $(u,v)$, we have
\begin{equation}
    d(u) \pi_u(v) = d(v) \pi_v (u).
\end{equation}
\label{proposition:symmetry-property}
\end{proposition}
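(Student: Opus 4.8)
The plan is to translate the scalar identity into a single matrix-symmetry statement and then exploit the only structural hypothesis available, namely $\bm A = \bm A^\top$ (the graph is undirected). Write $\bm P := \bm D^{-1}\bm A$ for the row-stochastic random-walk matrix, so that (\ref{equ:ppr}) reads $\bm \pi_s = \alpha \bm 1_s + (1-\alpha)\bm \pi_s \bm P$. By Proposition~\ref{prop:1} the matrix $\bm I - (1-\alpha)\bm P$ is invertible, hence $\pi_s(v) = \alpha\,\bigl[(\bm I-(1-\alpha)\bm P)^{-1}\bigr]_{sv}$; stacking these PPVs into the matrix $\bm \Pi$ with $\Pi_{sv} = \pi_s(v)$ gives $\bm \Pi = \alpha(\bm I-(1-\alpha)\bm P)^{-1}$. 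Since $[\bm D\bm \Pi]_{uv} = d(u)\pi_u(v)$ and $[\bm D\bm \Pi]_{vu} = d(v)\pi_v(u)$, the claim $d(u)\pi_u(v) = d(v)\pi_v(u)$ is exactly the assertion that $\bm D\bm \Pi$ is a symmetric matrix.

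First I would record the detailed-balance identity in matrix form: $\bm D\bm P = \bm D\bm D^{-1}\bm A = \bm A = \bm A^\top = \bm P^\top\bm D$, i.e. $\bm D\bm P$ is symmetric. From here I would finish with a one-line resolvent manipulation: pushing $\bm D$ through the inverse,
\[
\bm D\bm \Pi \;=\; \alpha\,\bm D\bigl(\bm I-(1-\alpha)\bm P\bigr)^{-1} \;=\; \alpha\bigl(\bm D^{-1}-(1-\alpha)\,\bm D^{-1}\bm A\bm D^{-1}\bigr)^{-1},
\]
and noting that the matrix being inverted is symmetric (both $\bm D^{-1}$ and $\bm D^{-1}\bm A\bm D^{-1}$ are), so its inverse — hence $\bm D\bm \Pi$ — is symmetric. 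Reading off the $(u,v)$ and $(v,u)$ entries then yields $d(u)\pi_u(v) = d(v)\pi_v(u)$. As an alternative I would give the series version: expand $\bm D\bm \Pi = \alpha\sum_{k\ge 0}(1-\alpha)^k\,\bm D\bm P^k$ and observe that each term $\bm D\bm P^k$ (for $k\ge 1$ a palindromic product $\bm A\bm D^{-1}\bm A\cdots\bm D^{-1}\bm A$ of symmetric matrices, and $\bm D\bm P^0 = \bm D$ diagonal) is symmetric, so the sum is symmetric; this form also gives the probabilistic reading — $\pi_u(v)$ is the chance an $\alpha$-lazy walk from $u$ is observed at $v$, and the identity is reversibility of the walk with respect to the degree measure.

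I do not anticipate a genuine obstacle: this is the standard self-adjointness of the PageRank operator with respect to the degree inner product $\langle \bm x,\bm y\rangle_{\bm D} := \bm x^\top\bm D\bm y$. The only points needing care are bookkeeping ones — fixing the row-vector convention consistently with (\ref{equ:ppr}) so the entries of $\bm \Pi$ are indexed the right way around, citing Proposition~\ref{prop:1} to license writing the resolvent $(\bm I-(1-\alpha)\bm P)^{-1}$, and, if the series route is taken, justifying the term-by-term handling of the convergent Neumann series (it converges absolutely since $\|(1-\alpha)\bm P\|_\infty = 1-\alpha < 1$, so multiplication by the bounded matrix $\bm D$ commutes with the summation).
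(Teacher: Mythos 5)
Your proof is correct. Note that the paper does not actually prove Proposition~\ref{proposition:symmetry-property} at all: it states the result and defers entirely to the citation of Lofgren et al., so there is no in-paper argument to compare against. Your self-contained derivation is the standard one and it fills that gap cleanly: writing $\bm \Pi = \alpha(\bm I-(1-\alpha)\bm D^{-1}\bm A)^{-1}$ (licensed by Proposition~\ref{prop:1}), the identity $d(u)\pi_u(v)=d(v)\pi_v(u)$ is exactly symmetry of $\bm D\bm \Pi$, and your resolvent manipulation $\bm D(\bm I-(1-\alpha)\bm P)^{-1} = (\bm D^{-1}-(1-\alpha)\bm D^{-1}\bm A\bm D^{-1})^{-1}$ reduces this to the symmetry of $\bm D^{-1}$ and $\bm D^{-1}\bm A\bm D^{-1}$, which follows from $\bm A=\bm A^\top$. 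The Neumann-series variant is equally valid and has the advantage of making the probabilistic content (reversibility of the walk with respect to the degree measure) visible term by term via the palindromic products $\bm A\bm D^{-1}\cdots\bm D^{-1}\bm A$. You were right to flag the row-vector convention as the one bookkeeping hazard: equation~(\ref{equ:ppr}) uses $\bm\pi_s\mapsto\bm\pi_s\bm W$ (row vector on the left) while the proof of Proposition~\ref{prop:1} in the appendix silently switches to the column-vector form $\bm D^{-1}\bm A\bm\pi_s$; your derivation consistently follows~(\ref{equ:ppr}), which is the correct convention for the random walk with restart, so the indexing $\Pi_{sv}=\pi_s(v)$ comes out right.
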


\begin{proof}[Proof of Lemma 7]
Assume there are $T$ iterations. For each forward push operation $t=1,2,\ldots T$, we assume the frontier node is $u_t$, the run time of one push operation is then $d(u_t)$. For total $T$ push operations, the total run time is $\sum_{i=1}^T d(u_i)$. Notice that during each push operation, the amount of $\|\bm r_s^{t-1}\|_1$ is reduced at least $\epsilon \alpha d(u_t)$, then we always have the following inequality
\begin{equation}
\epsilon \alpha d(u_t) < \| \bm r_s^{t-1} \|_1 - \| \bm r_s^t\|_1 \nonumber
\end{equation}
Apply the above inequality from $t=1,2,$ to $T$, we will have
\begin{equation}
\epsilon\alpha \sum_{t=1}^T d(u_t) \leq \|\bm r_s^0\| - \|\bm r_s^T\|_1 = 1 - \|\bm r_s \|_1,
\end{equation}
where $\bm r_s$ is the final residual vector. The total time is then $\mathcal{O}(1/\epsilon\alpha)$. To show the estimation error, we follow the idea of \cite{lofgren2015efficient}. Notice that, the forward local push algorithm always has the invariant property by Lemma \ref{lemma:1}, that is 
\begin{equation}
\pi_s(u) = p_s(u) + \sum_{v \in V} r_s(v) \pi_v(u), \forall u \in \mathbb{V}.
\end{equation}
By proposition \ref{proposition:symmetry-property}, we have
\begin{align*}
\pi_s(u) &= p_s(u) + \sum_{v \in V} r_s(v) \pi_v(u), \forall u \in \mathbb{V} \\
&= p_s(u) + \sum_{v \in V} r_s(v) \frac{d(u)}{d(v)} \pi_u(v), \forall u \in \mathbb{V} \\
&\leq p_s(u) + \sum_{v \in V} \epsilon d(u) \pi_u(v), \forall u \in \mathbb{V} = p_s(u) + \epsilon d(u),
\end{align*}
where the first inequality by the fact that $r_s(v) \leq \epsilon d(v)$ and the last equality is due to $\| \pi_u \|_1 = 1$.
\end{proof}

\begin{proposition}[\cite{zhang2016approximate}]
Let $\mathcal{G}=(V,E)$ be undirected and let $t$ be a vertex of $V$, then $\sum_{x\in V} \frac{\pi_x(t)}{d(t)} \leq 1$.
\end{proposition}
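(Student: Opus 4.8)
The plan is to use the symmetry property of personalized PageRank on undirected graphs (Proposition \ref{proposition:symmetry-property}) to convert the quantity $\sum_{x \in V} \pi_x(t)/d(t)$ into something whose mass is controlled by the normalization $\|\bm \pi_x\|_1 = 1$. First I would rewrite each summand: by Proposition \ref{proposition:symmetry-property}, for every pair $(x,t)$ we have $d(x)\pi_x(t) = d(t)\pi_t(x)$, hence
\begin{equation}
\frac{\pi_x(t)}{d(t)} = \frac{\pi_t(x)}{d(x)}. \nonumber
\end{equation}
Summing over all $x \in V$ gives $\sum_{x \in V} \frac{\pi_x(t)}{d(t)} = \sum_{x \in V} \frac{\pi_t(x)}{d(x)}$.

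Next I would bound the right-hand side. Since $d(x) \geq 1$ for every vertex $x$ that appears (any vertex with $\pi_t(x) > 0$ is reachable from $t$, so has degree at least one), we get $\sum_{x \in V} \frac{\pi_t(x)}{d(x)} \leq \sum_{x \in V} \pi_t(x) = \|\bm \pi_t\|_1 = 1$, where the last equality is the standard fact that a personalized PageRank vector is a probability distribution (it follows from summing the entries of the fixed-point equation (\ref{equ:ppr}), since $\bm W^t$ is row-stochastic). This completes the argument.

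The only real subtlety — and the step I would be most careful about — is the handling of vertices with $d(x) = 0$, i.e. isolated vertices: the symmetry identity is stated for undirected graphs and the division by $d(x)$ must be justified. The clean resolution is to observe that $\pi_t(x) = 0$ whenever $x$ is unreachable from $t$ (in particular when $x$ is isolated and $x \neq t$), so such terms contribute nothing to either side and can be dropped; for all remaining $x$ we have $d(x) \geq 1$ and the inequality $\pi_t(x)/d(x) \leq \pi_t(x)$ is immediate. Everything else is routine substitution, so I expect the write-up to be just a few lines.
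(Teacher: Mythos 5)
Your proposal is correct and follows exactly the paper's own argument: apply the degree-weighted symmetry identity $d(x)\pi_x(t) = d(t)\pi_t(x)$ termwise, then bound $\sum_x \pi_t(x)/d(x) \leq \sum_x \pi_t(x) = 1$ using $d(x)\geq 1$. Your extra remark about isolated vertices is a reasonable bit of care that the paper's one-line proof glosses over, but it does not change the substance of the argument.
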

\begin{proof}
By using Proposition \ref{proposition:symmetry-property}, we have
\begin{align*}
\sum_{x\in V} \frac{\pi_x(t)}{d(t)} = \sum_{x\in V} \frac{\pi_t(x)}{d(x)} \leq \sum_{x\in V} \pi_t(x) = 1. 
\end{align*}
\end{proof}

\section{Heuristic method: Commute}

We update the embeddings by their pairwise relationship (resistance distance). The commute distance (i.e. resistance distance) $C_{u v} = H_{u v} + H_{v u}$, where  rescaled hitting time $H_{u v}$ converges to $1/d(v)$. As proved in~\cite{von2014hitting}, when the number of nodes in the graph is large enough, we can show that the commute distance tends to $1/d_v + 1/d_u$.

\begin{center}
\begin{minipage}{0.5\textwidth}
\begin{algorithm}[H]
\caption{\textsc{Commute}}
\label{algo:commute}
\begin{algorithmic}[1]
\State \textbf{Input:} An graph $\mathcal{G}^0(\mathcal{V}^0,\mathcal{E}^0)$ and embedding $\bm W^0$, dimension $d$.
\State \textbf{Output:} $\bm W^T$
\For {$e^t(u,v,t) \in \left\{e^1(u^1,v^1,t_1), \ldots, e^T(u^T,v^T,t_T)\right\}$}
\State Add $e^t$ to $\mathcal{G}^t$
\State \textbf{If } $u\notin V^{t-1}$ \textbf{ then } 
\State \qquad generate $\bm w_u^t = \mathcal{N}(\bm 0,0.1\cdot\bm I_d)$ or $ \mathcal{U}(-0.5,0.5)/d$
\State \textbf{If } $v\notin V^{t-1}$ \textbf{ then } 
\State \qquad generate $\bm w_v^t = \mathcal{N}(\bm 0,0.1\cdot\bm I_d)$ or $\mathcal{U}(-0.5,0.5)/d$
\State $\bm w_u^t = \frac{d(u)}{d(u)+1} \bm w_u^{t-1} + \frac{1}{d(u)} \bm w_v^t$
\State $\bm w_v^t = \frac{d(v)}{d(v)+1} \bm w_v^{t-1} + \frac{1}{d(v)} \bm w_u^t$
\EndFor
\State \textbf{Return} $\bm W_T$
\end{algorithmic}
\end{algorithm}
\end{minipage}
\end{center}

One can treat the Commute method, i.e. Algorithm \ref{algo:commute}, as the first-order approximation of RandNE \cite{zhang2018billion}. The embedding generated by RandNE is given as the following
\begin{equation}
    \bm U = \left( \alpha_0 \bm I + \alpha_1 \bm A + \alpha_2 \bm A^2 + \ldots + \alpha_q \bm A^q \right) \bm R,
\end{equation}
where $\bm A$ is the normalized adjacency matrix and $\bm I$ is the identity matrix. At any time $t$, the dynamic embedding of node $i$ of Commute is given by
\begin{align*}
\bm w_i^t &= \frac{d(u)}{d(u)+1} \bm w_i^{t-1} + \frac{1}{d(u)} \bm w_v^t    \\
&= \frac{1}{d(u)+1} \bm w_i^{0} + \frac{1}{|\mathcal{N}(i)|} \sum_{j\in \mathcal{N}(i)} \frac{1}{d(u)} \bm w_v^t    \\
\end{align*} 

\section{Details of data preprocessing}
\label{appendix::data}
In this section, we describe the preprocessing steps of three datasets. \textbf{Enwiki20}: In Enwiki20 graph, the edge stream is divided into 6 snapshots, containing edges before 2005, 2005-2008, ..., 2017-2020. The sampled nodes in the first snapshot fall into 5 categories. \textbf{Patent}: In full patent graph, we divide edge stream into 4 snapshots, containing patents citation links before 1985, 1985-1990,..., 1995-1999. In node classification tasks, we sampled 3,000 nodes in the first snapshot, which fall into 6 categories. 
In patent small graph, we divide into 13 snapshots with a 2-year window. All the nodes in each snapshot fall into 6 categories.%
\textbf{Coauthor graph}: In full Coauthor graph, we divide edge stream into 7 snapshots (before 1990, 1990-1995, ..., 2015-2019). The sampled nodes in the first snapshot fall into 9 categories. 
In Coauthor small graph, the edge stream is divided into 9 snapshots (before 1978, 1978-1983,..., 2013-2017). All the nodes in each snapshot have 14 labels in common.

\section{Details of parameter settings}
\label{appendix::parameter}

\textbf{Deepwalk}: number-walks=40, walk-length=40, window-size=5\\
\textbf{Node2Vec}: Same as Deepwalk, p = 0.5, q = 0.5\\
\textbf{DynamicTriad}: iteration=10, beta-smooth=10, beta-triad=10. Each input snapshot contains the previous existing edges and newly arrived edges. \\
\textbf{RandNE}: q=3, default weight for node classification $[1,1e2,1e4,1e5]$, input is the transition matrix, the output feature is normalized (l-2 norm) row-wise. \\
\textbf{DynamicPPE}: $\alpha=0.15$,  $\epsilon=0.1$, projection method=hash. our method is relatively insensitive to the hyper-parameters.

\section{Visualizations of Embeddings }
\label{tsne-embedding}
We visualize the embeddings of small scale graphs using T-SNE\cite{van2008visualizing} in Fig.\ref{fig:nc-tsne-patent-small},\ref{fig:nc-tsne-coauthor-small},\ref{fig:nc-tsne-academic-small}.

\begin{figure}[htbp]
	\centering
	\begin{subfigure}{.23\textwidth}
		\includegraphics[width=3cm]{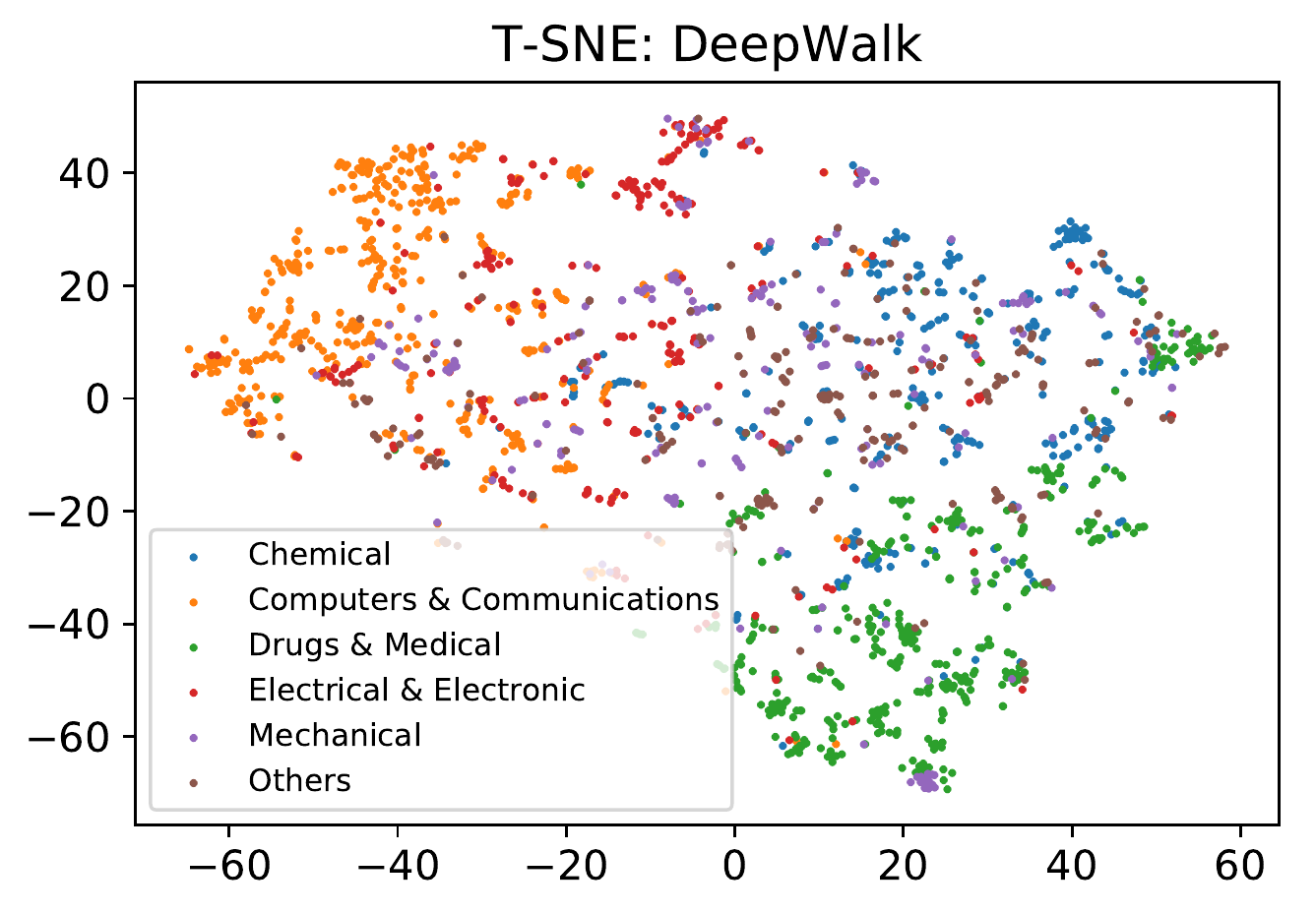}
	\end{subfigure}
	\begin{subfigure}{.23\textwidth}
		\includegraphics[width=3cm]{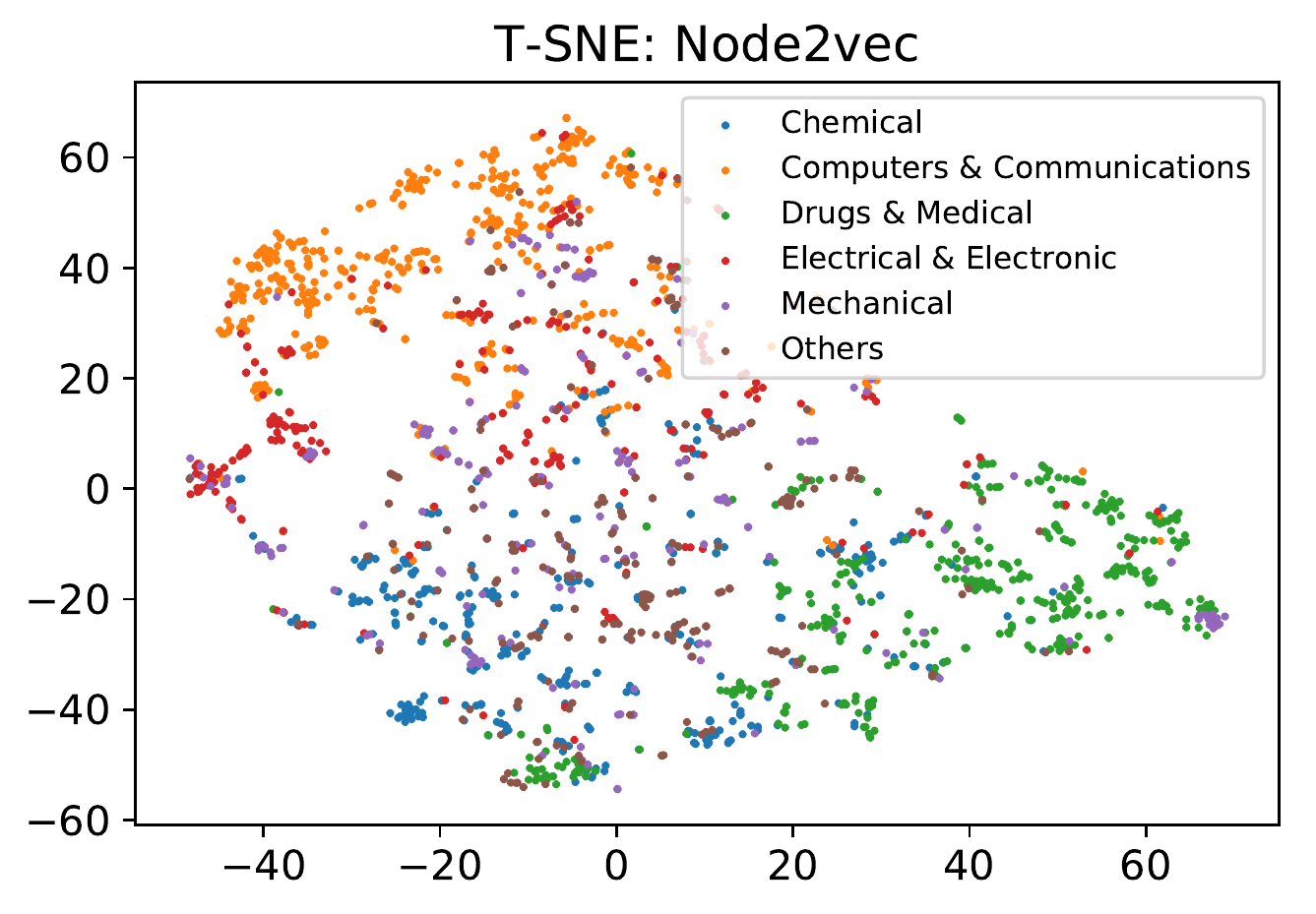}
	\end{subfigure}
	\begin{subfigure}{.23\textwidth}
		\includegraphics[width=3cm]{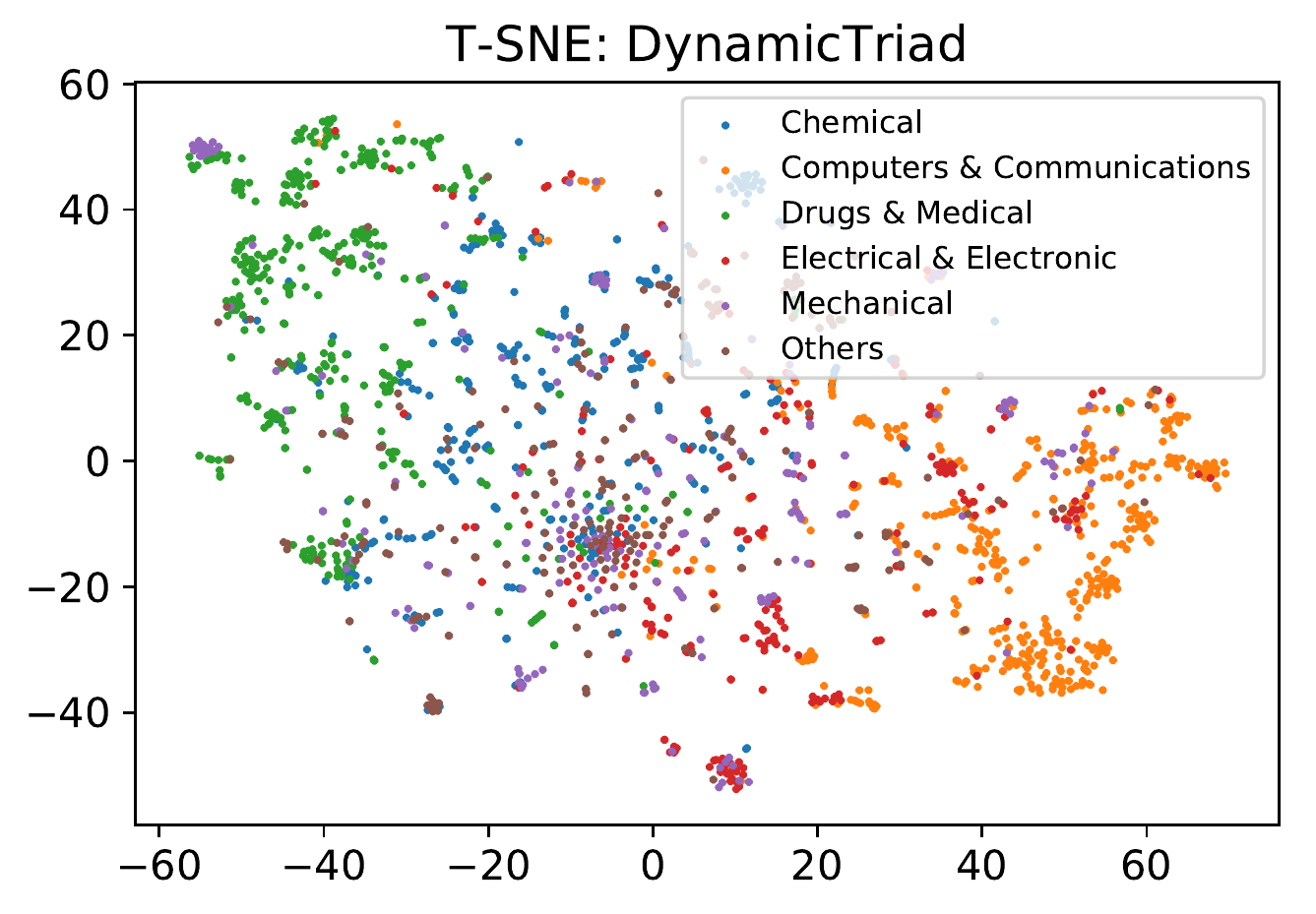}
	\end{subfigure}
	\begin{subfigure}{.23\textwidth}
		\includegraphics[width=3cm]{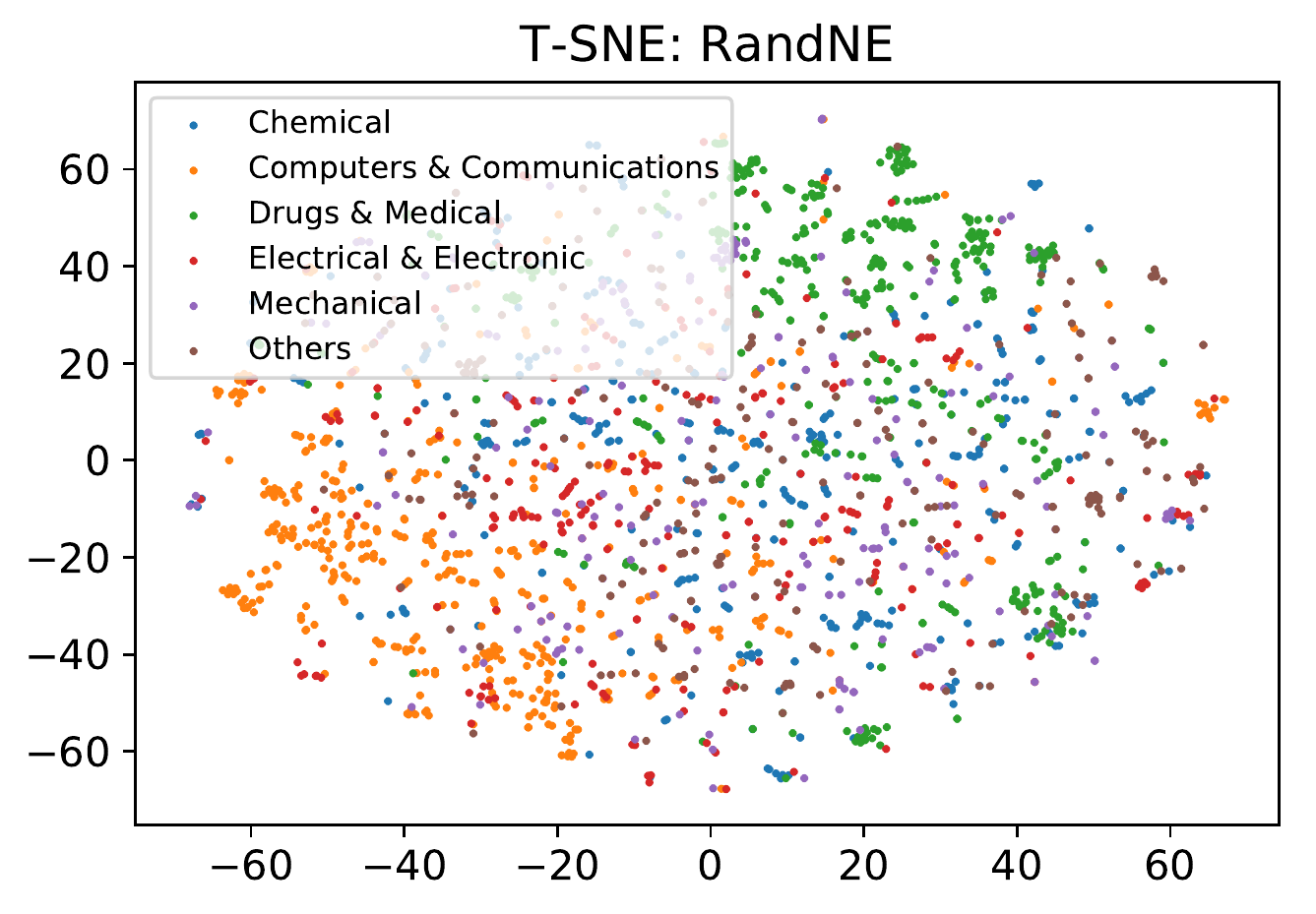}
	\end{subfigure}
	\begin{subfigure}{.23\textwidth}
		\includegraphics[width=3cm]{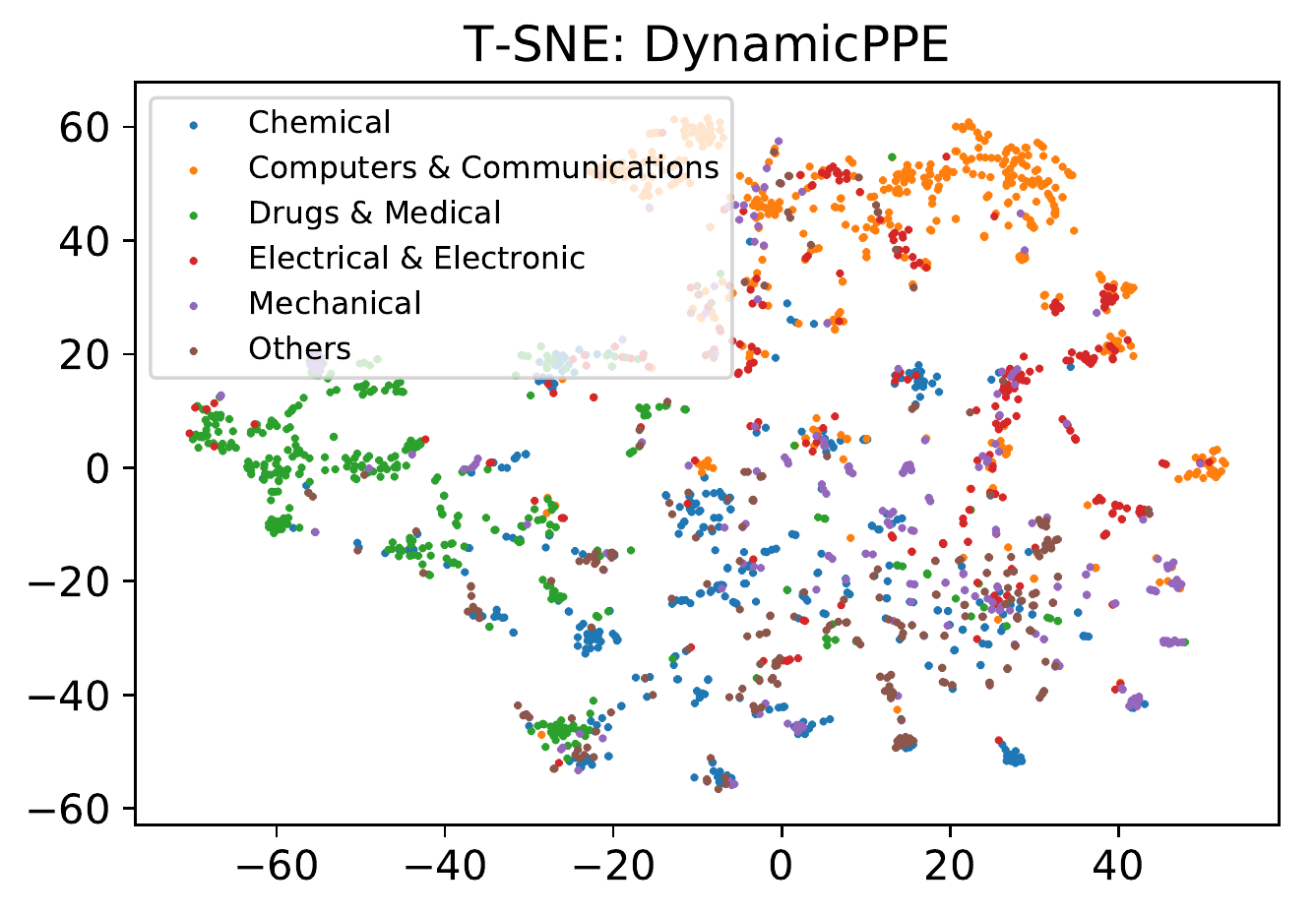}
	\end{subfigure}
    \caption{We randomly select 2,000 nodes from patent-small and visualize their embeddings using T-SNE\cite{van2008visualizing}}.
    \label{fig:nc-tsne-patent-small}
\end{figure}

\begin{figure}[htbp]
	\centering
	\begin{subfigure}{.23\textwidth}
		\includegraphics[width=3cm]{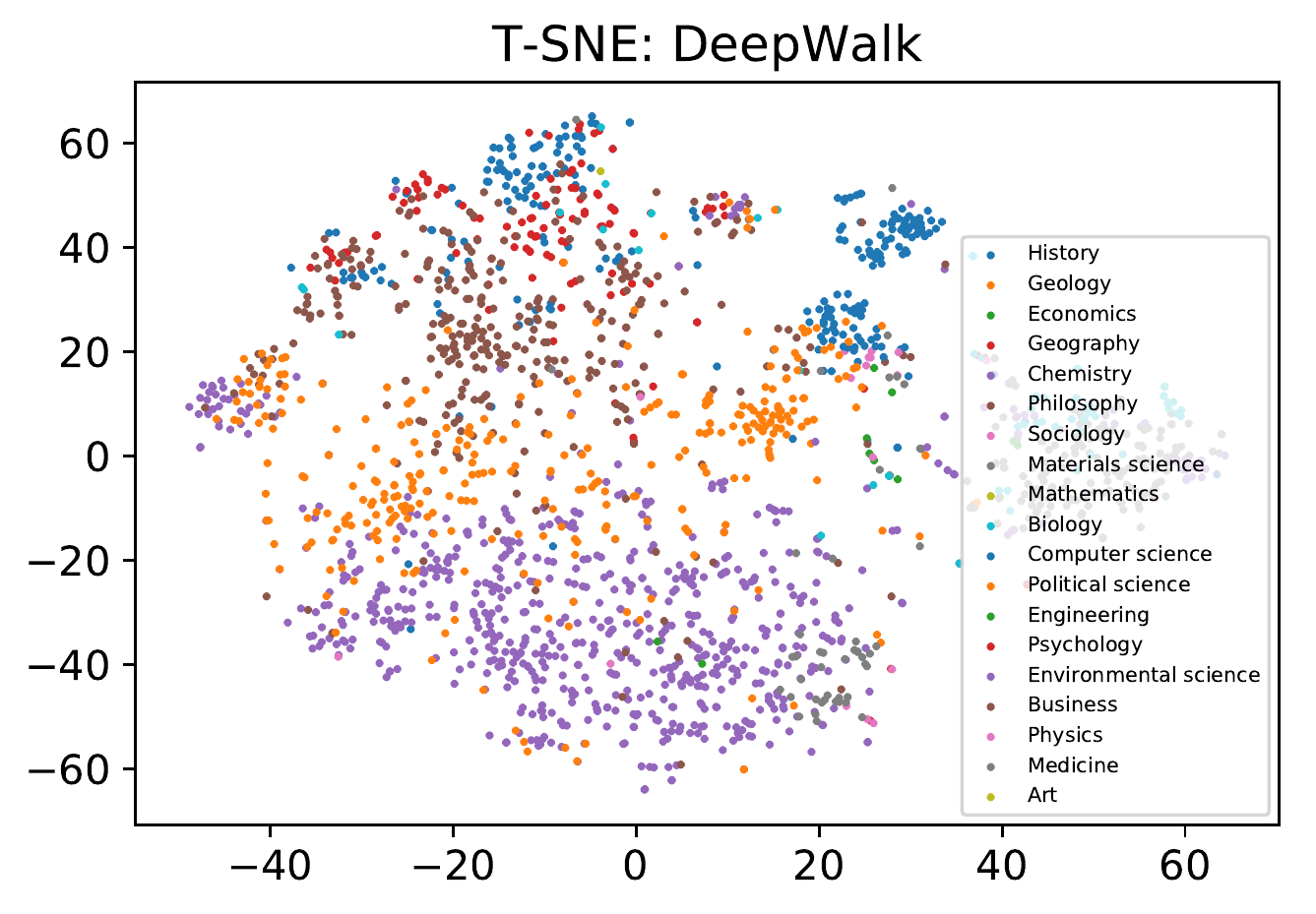}
	\end{subfigure}
	\begin{subfigure}{.23\textwidth}
		\includegraphics[width=3cm]{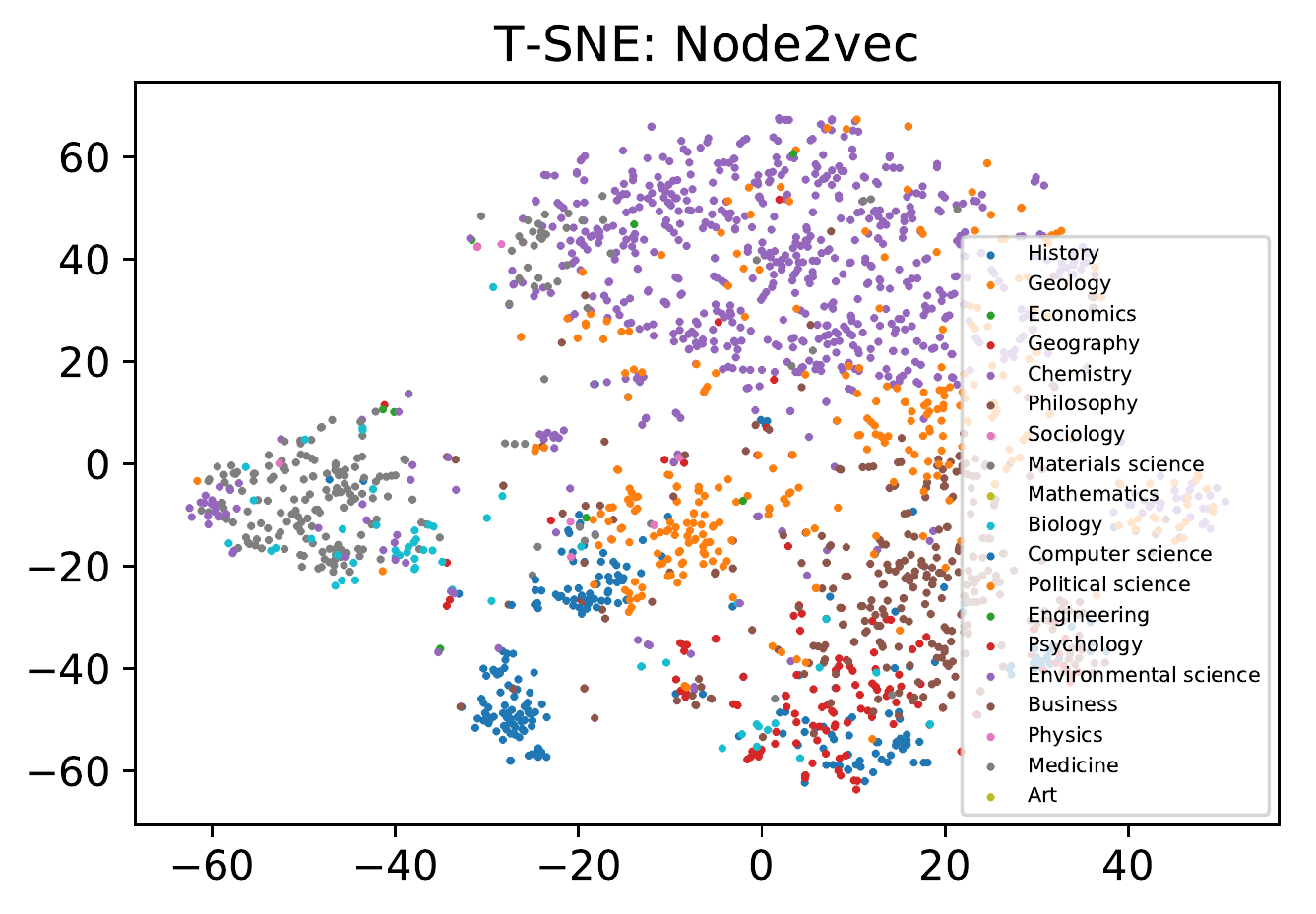}
	\end{subfigure}
	\begin{subfigure}{.23\textwidth}
		\includegraphics[width=3cm]{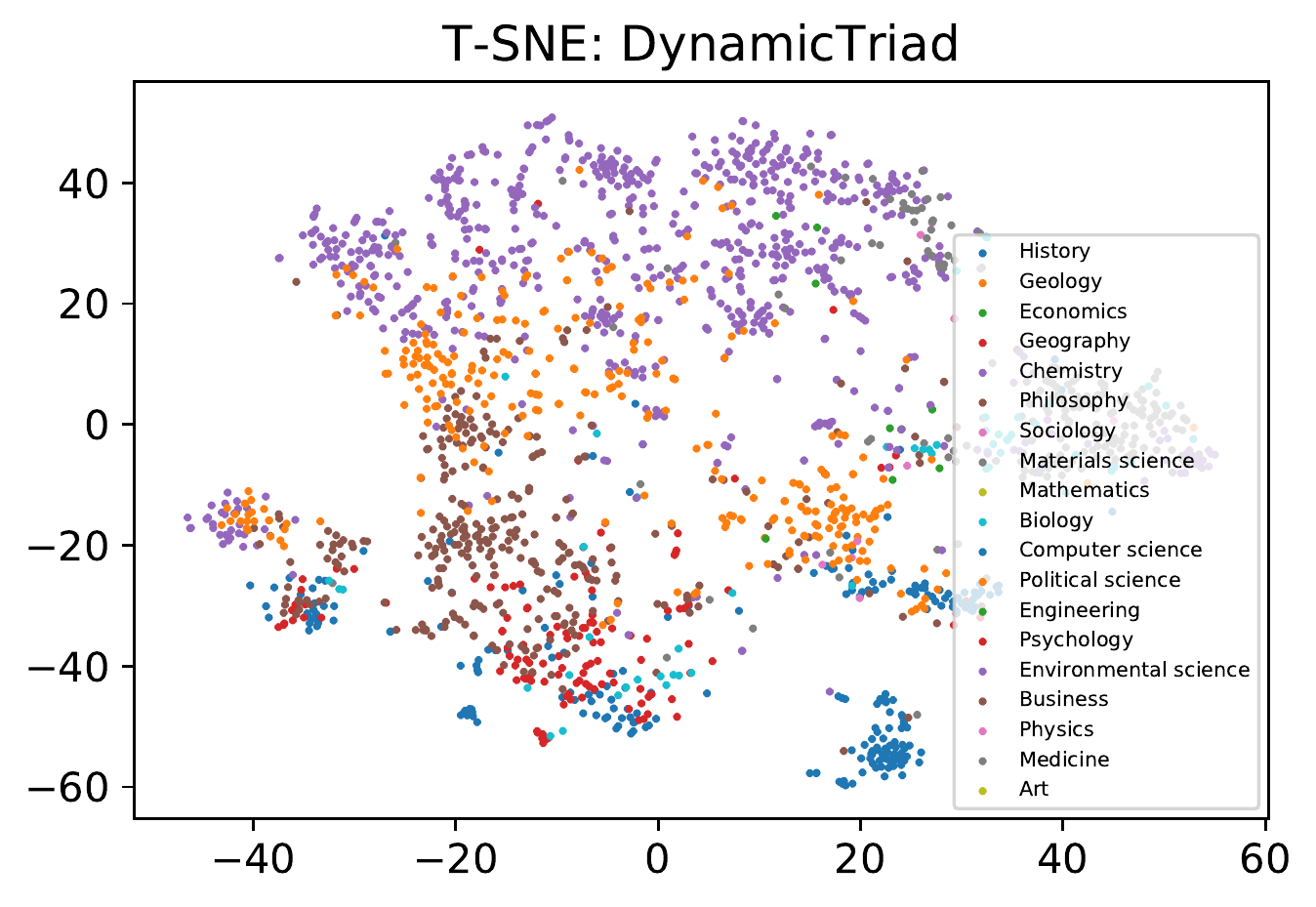}
	\end{subfigure}
	\begin{subfigure}{.23\textwidth}
		\includegraphics[width=3cm]{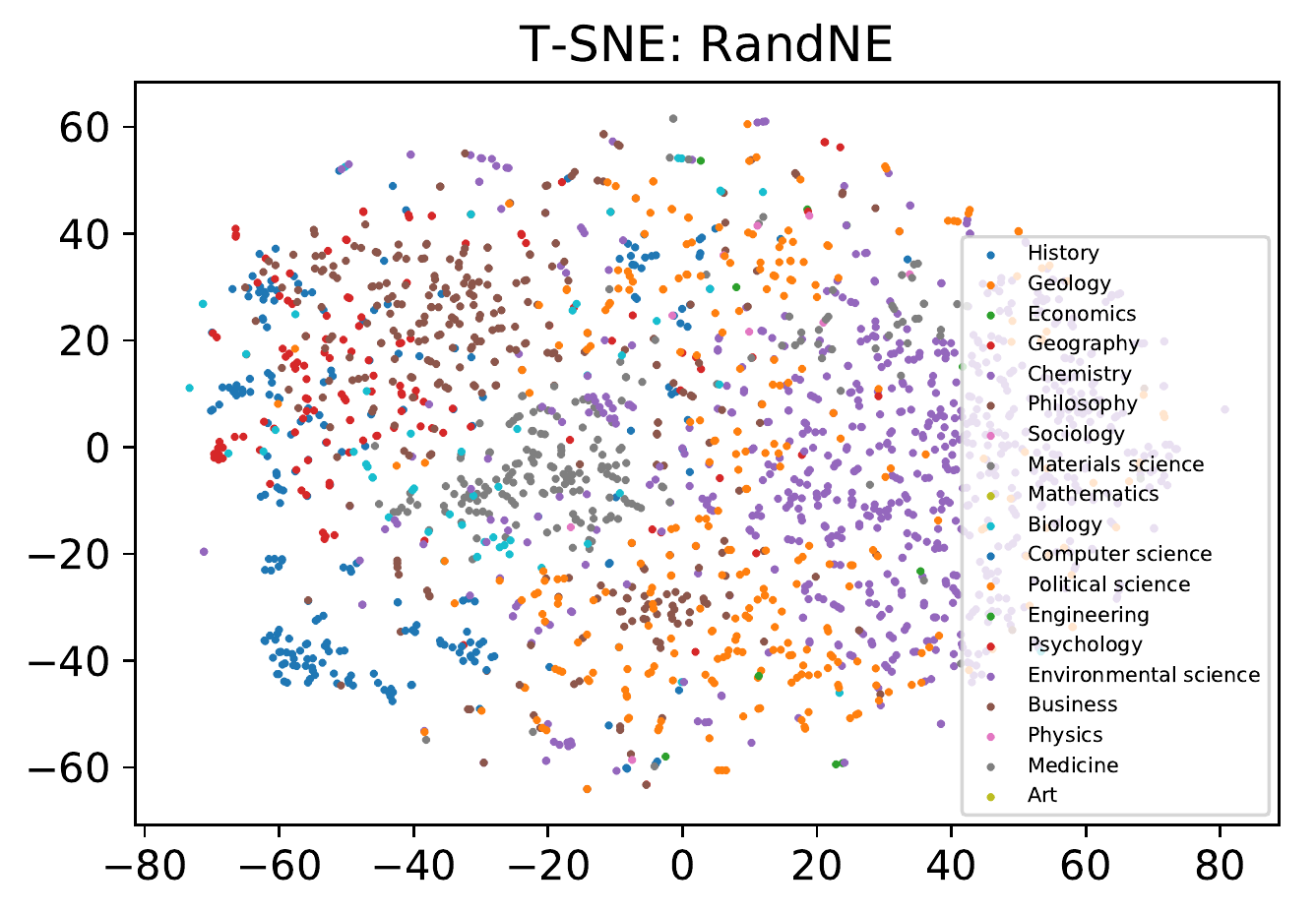}
	\end{subfigure}
	\begin{subfigure}{.23\textwidth}
		\includegraphics[width=3cm]{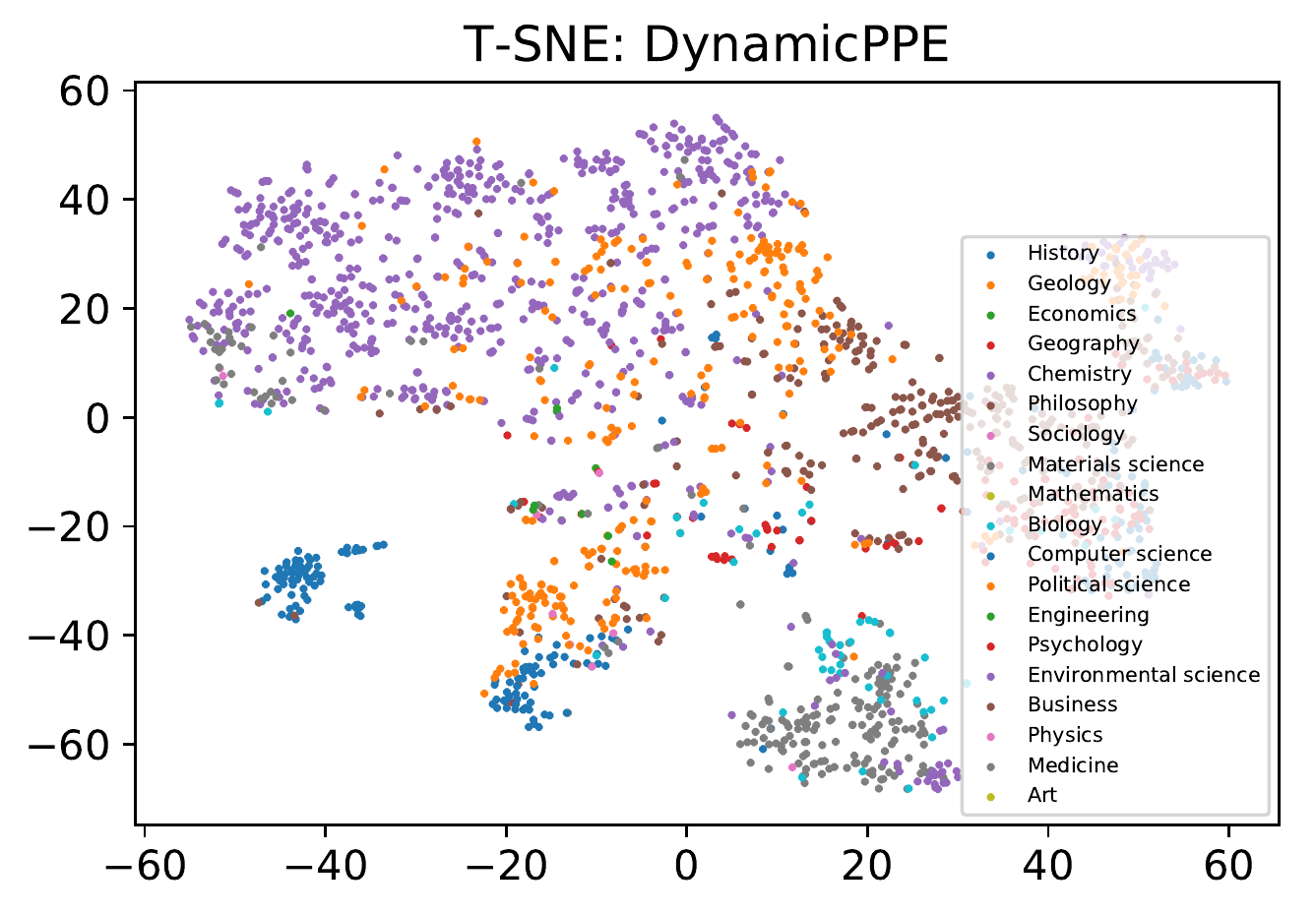}
	\end{subfigure}
    \caption{We randomly select 2,000 nodes from Coauthor-small graph and visualize their embeddings using T-SNE\cite{van2008visualizing}}.
    \label{fig:nc-tsne-coauthor-small}
\end{figure}

\begin{figure}[htbp]
	\centering
	\begin{subfigure}{.23\textwidth}
		\includegraphics[width=3cm]{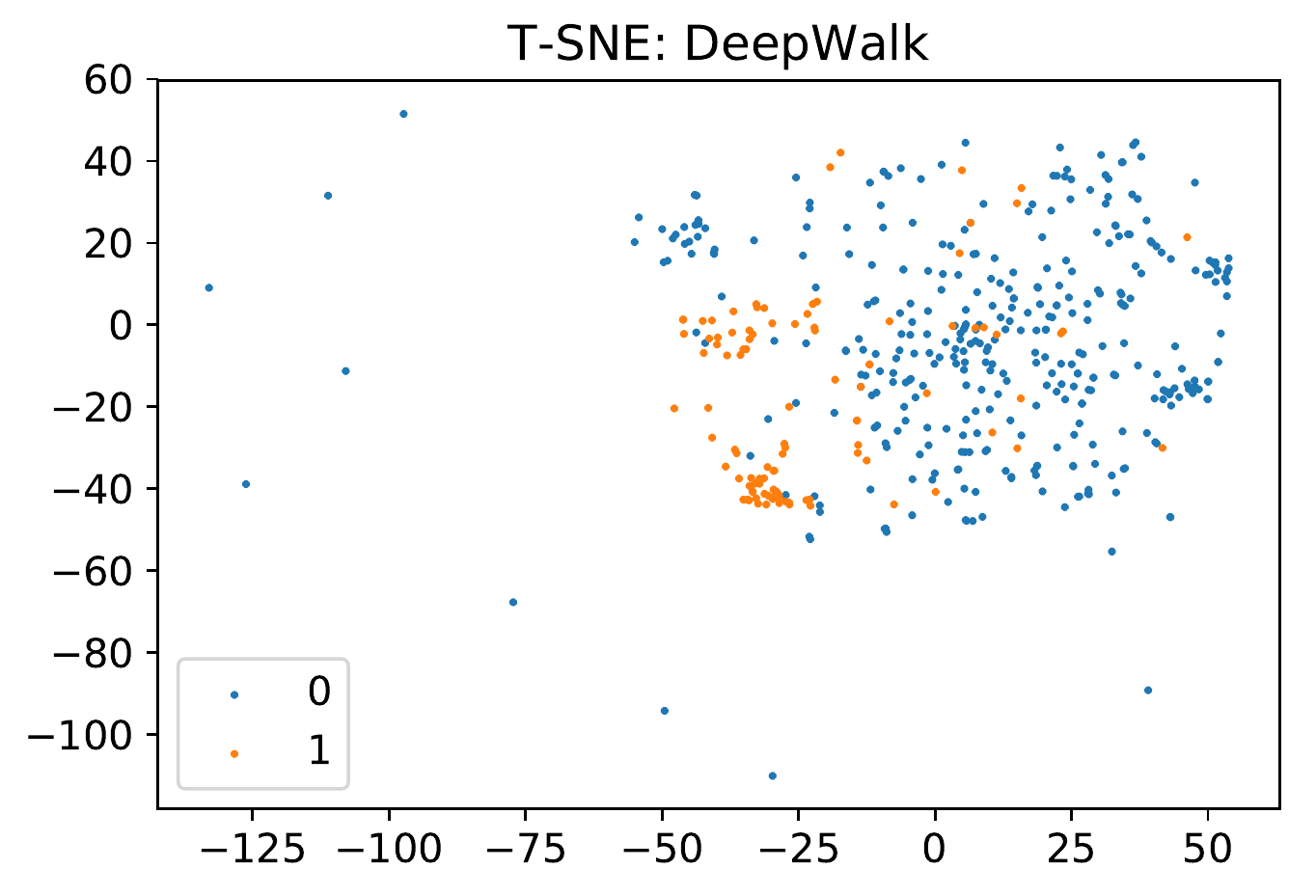}
	\end{subfigure}
	\begin{subfigure}{.23\textwidth}
		\includegraphics[width=3cm]{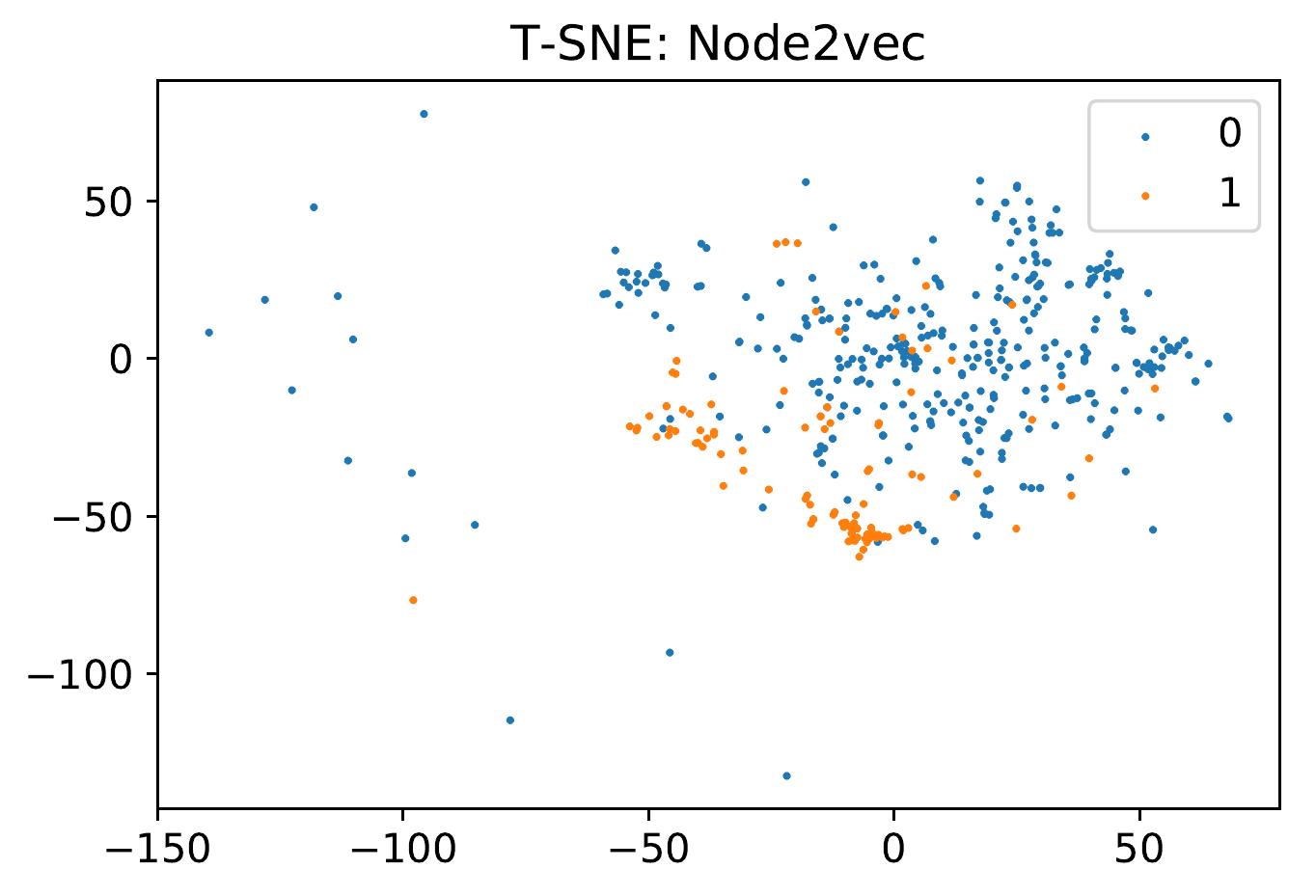}
	\end{subfigure}
	\begin{subfigure}{.23\textwidth}
		\includegraphics[width=3cm]{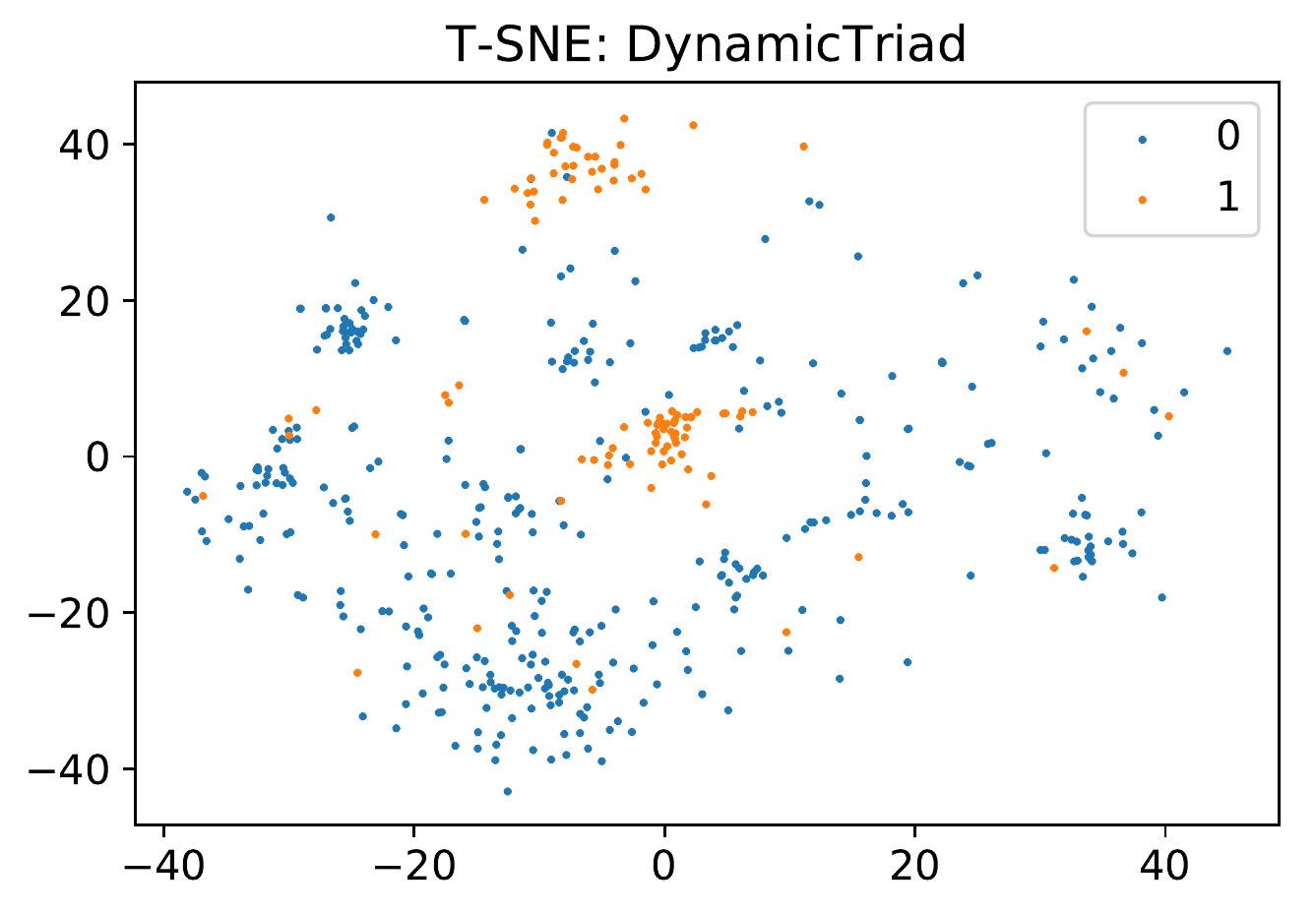}
	\end{subfigure}
	\begin{subfigure}{.23\textwidth}
		\includegraphics[width=3cm]{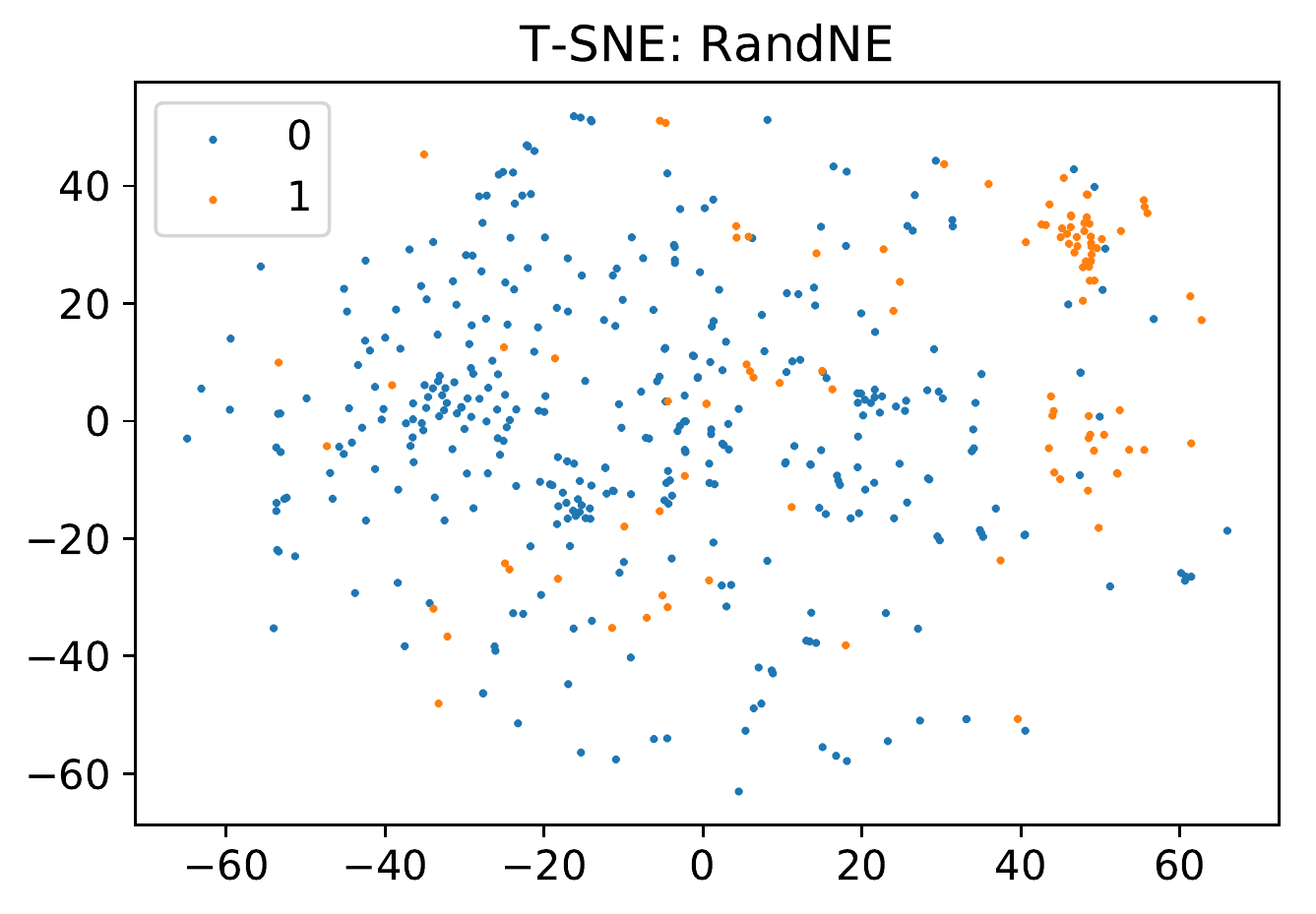}
	\end{subfigure}
	\begin{subfigure}{.23\textwidth}
		\includegraphics[width=3cm]{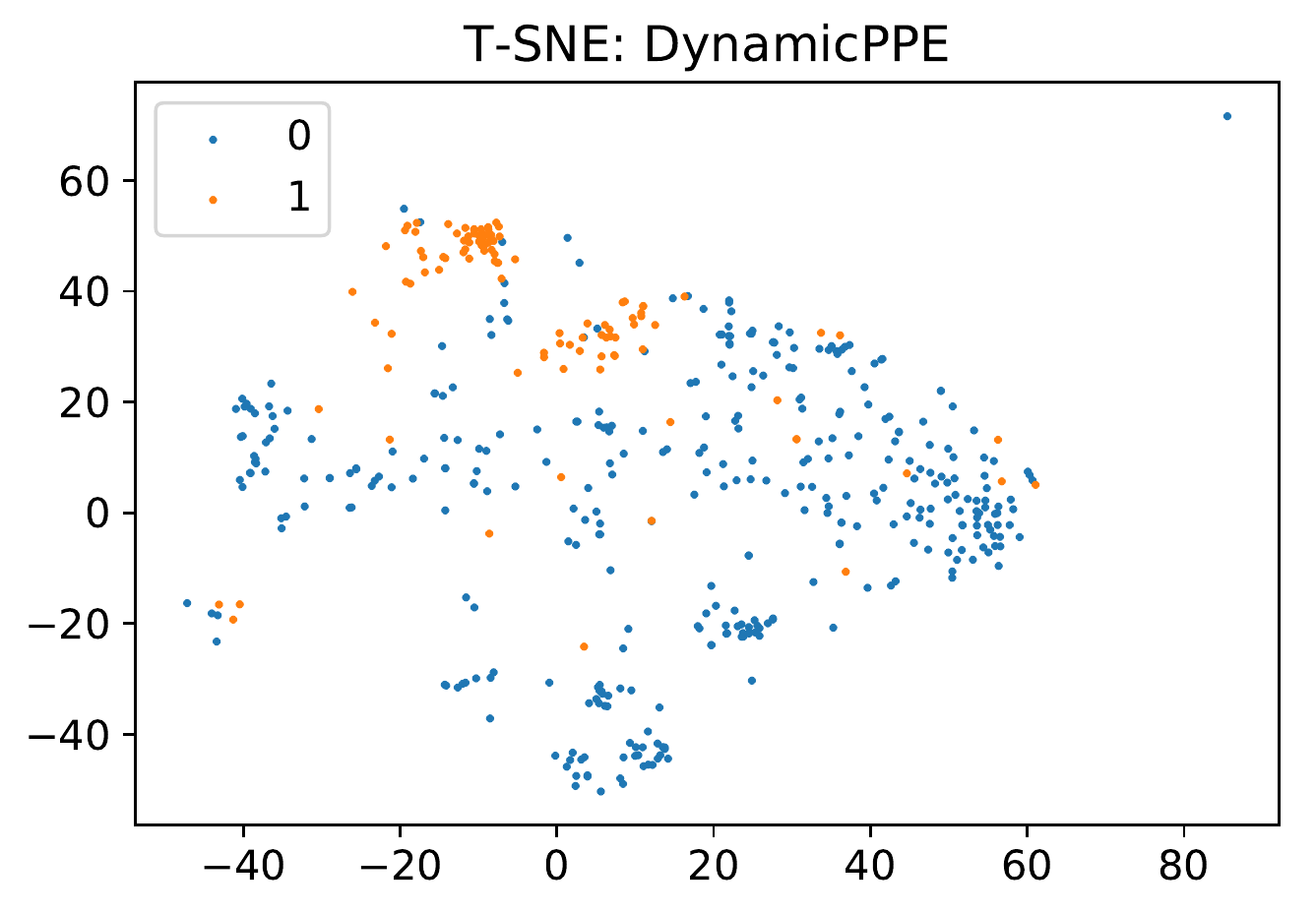}
	\end{subfigure}
    \caption{We select all labeled nodes from Academic graph and visualize their embeddings using T-SNE\cite{van2008visualizing}}.
    \label{fig:nc-tsne-academic-small}
\end{figure}

\end{document}